\newtheorem{theorem}{Theorem}[section]
\newtheorem{lemma}[theorem]{Lemma}
\newtheorem{corollary}[theorem]{Corollary}
\theoremstyle{definition}
\newtheorem{definition}[theorem]{Definition}
\theoremstyle{remark}
\newtheorem{remark}[theorem]{Remark}
\newtheorem*{NB}{N.B}
\numberwithin{equation}{section}
\newcommand{\bC}{\mathbb{C}}
\newcommand{\bK}{\mathbb{K}}
\newcommand{\bN}{\mathbb{N}}
\newcommand{\bR}{\mathbb{R}}
\newcommand{\bZ}{\mathbb{Z}}
\newcommand{\cA}{\mathcal{A}}
\newcommand{\cB}{\mathcal{B}}
\newcommand{\cD}{\mathcal{D}}
\newcommand{\cH}{\mathcal{H}}
\newcommand{\cHi}{\mathcal{H}^{\infty}}
\newcommand{\cK}{\mathcal{K}}
\newcommand{\cL}{\mathcal{L}}
\newcommand{\cP}{\mathcal{P}}
\newcommand{\gtimes}{\hat{\otimes}}
\newcommand{\abs}[1]{\left|{#1}\right|}
\newcommand{\norm}[1]{\left\|{#1}\right\|}
\newcommand{\ip}[1]{\left\langle {#1}\right\rangle}
\newcommand{\hp}[1]{\left({#1}\right)}
\newcommand{\ncint}{{\int \!\!\!\!\!\! -}}
\newcommand{\set}[1]{\left\{{#1}\right\}}
\newcommand{\inv}[1]{{#1}^{-1}}
\newcommand{\Cstar}{{\ensuremath{C^*}}}
\newcommand{\latin}[1]{{\it{#1}\/}}
\newcommand{\Star}{\ensuremath{\ast}}
\newcommand{\term}[1]{{\it{#1}\/}}
\DeclareMathOperator{\dom}{Dom}
\DeclareMathOperator{\Cl}{Cl}
\DeclareMathOperator{\bCl}{\mathbb{C}l}
\DeclareMathOperator{\End}{End}
\DeclareMathOperator{\Id}{id}
\begin{document}

\title[Almost-commutative spectral triples]{A reconstruction theorem for almost-commutative spectral triples}

\author{Branimir \'Ca\'ci\'c}
\address{Department of Mathematics\\ California Institute of Technology\\ MC 253-37\\1200 E California Blvd\\ Pasadena, CA 91125}
\email{branimir@caltech.edu}

\subjclass[2010]{Primary 58B34; Secondary 46L87, 81T75}

\keywords{Noncommutative geometry, Spectral triple, Almost-commutative, Dirac-type operator}

\date{June 28, 2012}


\begin{abstract}
We propose an expansion of the definition of almost-commutative spectral triple that accommodates non-trivial fibrations and is stable under inner fluctuation of the metric, and then prove a reconstruction theorem for almost-commutative spectral triples under this definition as a simple consequence of Connes's reconstruction theorem for commutative spectral triples. Along the way, we weaken the orientability hypothesis in the reconstruction theorem for commutative spectral triples, and following Chakraborty and Mathai, prove a number of results concerning the stability of properties of spectral triples under suitable perturbation of the Dirac operator.
\end{abstract}

\maketitle

\section{Introduction}

The commutative Gel'fand-Na{\u\i}mark theorem, which constructs a contravariant equivalence between the category of compact Hausdorff spaces and continuous maps, and the category of commutative unital \Cstar-algebras and \Star-homomorphisms, has motivated the identification in noncommutative geometry of the category of \Cstar-algebras as a category of noncommutative topological spaces. However, noncommutative geometers have not yet settled upon a category of noncommutative Riemannian manifolds.

Nevertheless, Connes has proposed a candidate for the objects of such a category, motivated by the example of the Atiyah--Singer Dirac operator of a compact spin manifold: \emph{spectral triples}~\cite{Con95}. On the one hand, every compact oriented Riemannian manifold can be seen to give rise to a spectral triple, for every such manifold $M$ gives rise to symmetric Dirac-type operators on self-adjoint Clifford module bundles over $M$. On the other hand, after substantial attempts by Rennie and V\'arilly~\cite{RV06}, Connes proved~\cite{Con08} in 2008 the so-called reconstruction theorem for commutative spectral triples, a result conjectured by him in 1996~\cite{Con96} that shows that a spectral triple with commutative algebra and satisfying certain conditions necessarily arises from a compact oriented Riemannian manifold.

Soon after introducing spectral triples, Connes also introduced \term{almost-commuta\-tive} spectral triples for the express purpose of reformulating the [classical field theory of the] Standard Model in noncommutative-geometric terms~\cites{Con95b,Con96}, a project that culminated in 2006 with the near-simultaneous papers by Barrett~\cite{Bar07} and by Chamsedinne, Connes and Marcolli~\cite{CCM07} (see also \cites{Con06,CoMa}). Such spectral triples are defined as the noncommutative-geometric Cartesian product of the canonical spectral triple of a compact spin manifold, the prototypical \term{commutative} spectral triple, with a \term{finite} spectral triple, namely, a spectral triple with finite-dimensional Hilbert space (see~\cites{Kra98,PS98,Ca10} for the general theory, and~\cites{ACG1,ACG2,ACG3,ACG4,ACG5} for classification results).

In 2008, Connes suggested that a reconstruction theorem should be feasible for almost-commutative spectral triples as well~\cite{Con08b}. In this paper, we prove just such a result, indeed, as a straightforward consequence of the reconstruction theorem for commutative spectral triples. However, this requires expanding the definition of almost-commutative spectral triples to include, roughly speaking, finite fibrations of commutative spectral triples, instead of only Cartesian products of commutative spectral triples with finite spectral triples. Indeed, we propose to define an \term{almost-commutative spectral triple} as a triple of the form $(C^{\infty}(X,A),L^{2}(X,H),D)$, where:
\begin{itemize}
	\item $H$ is a (possibly $\bZ/2\bZ$-graded) self-adjoint Clifford module bundle over a compact oriented Riemannian manifold $X$;
	\item $D$ is a symmetric Dirac-type operator on $H$;
	\item $A$ is a real unital \Star-algebra sub-bundle of the bundle $\End_{\Cl(X)}^{+}(H)$ of even endomorphisms of $H$ supercommuting with the Clifford action on $H$ defined by $D$. 
\end{itemize}  
This expanded definition still allows for all the familiar global-analytic tools for computing the spectral action on almost-commutative triples. Moreover, unlike the conventional definition, it is stable under inner fluctuations of the metric, and encompasses a number of global-analytically defined spectral triples already discussed in the literature, for instance, by Zhang~\cite{Zha10} and by Boeijink and van Suijlekom~\cite{BvS10}.

Along the way, we also show how to weaken the orientability hypothesis on the reconstruction theorem for commutative spectral triples, by only requiring that the action of the orientability Hochschild cycle anticommute with noncommutative $1$-forms in the even-dimensional case, and commute with noncommutative $1$-forms in the odd dimensional case. Moreover, following Chakraborty and Mathai~\cite{ChM09}, we provide proofs of a number of folkloric results on the stability of certain properties of spectral triples under suitable perturbation of the Dirac operator.

The author would like to thank his advisor, Matilde Marcolli, for her extensive comments and for her advice, support and patience, Partha Sarathi Chakraborty and Nigel Higson for technical advice on some of the stability results, and Nikola\u \i\ Ivankov, Helge Kr\"uger, Bram Mesland, Kevin Teh, Rafael Torres and Dapeng Zhang for helpful comments and conversations. The author also gratefully acknowledges the financial and administrative support of the Department of Mathematics of the California Institute of Technology, as well as the hospitality and support of the Max Planck Institute for Mathematics.

\section{Definitions and results}

In the following, we shall motivate and propose both concrete (i.e., global-analytic) and abstract (i.e., noncommutative-geometric) definitions of almost-com\-mutative spectral triples, and then state and prove a reconstruction theorem for almost-commutative spectral triples, thereby establishing the equivalence of the concrete and abstract definitions.

\subsection{Dirac-type operators} Let us first recall a few definitions and facts from the theory of Dirac-type operators, mostly to establish notation and terminology; for a full account, see~\citelist{\cite{BGV}*{\S\S~3.1--3} \cite{GBVF}*{Chapter 5, \S\S~9.1--3}}. Throughout, we shall use the conventions of super-linear algebra~\cite{BGV}*{\S~1.3}. Thus, if $H$ is a $\bZ/2\bZ$-graded vector bundle, we consider $\End(H)$ as $\bZ/2\bZ$-graded as well, and we shall consider the $\bZ/2\bZ$-graded tensor product of $\bZ/2\bZ$-graded vector bundles and operators on them, which we denote by $\gtimes$. If a vector bundle is not explicitly $\bZ/2\bZ$-graded, as shall often be the case, we consider it as trivially $\bZ/2\bZ$-graded.

Now recall that, given a Riemannian manifold $(X,g)$, one can define the Clifford bundle $\Cl(X)$ of $(X,g)$ and thus consider Clifford module bundles over $(X,g)$; in particular, a Clifford module bundle is \term{self-adjoint} if it is Hermitian as a vector bundle, and if $1$-forms act as skew-symmetric bundle endomorphisms. If $H$ is a self-adjoint Clifford module bundle, we can define the unital \Star-algebra sub-bundle $\End_{\Cl(X)}^{+}(H)$ of $\End(H)$ whose sections are those even bundle endomorphisms that supercommute (and thus commute) with the Clifford action on $H$.

At last, we can recall the following definition, which will be central to our discussion of almost-commutative spectral triples:

\begin{definition}
Let $E$ be a Clifford module bundle over a Riemannian manifold $(X,g)$. A \term{Dirac-type operator} on $E$ is a first-order differential operator on $E$ such that \[[D,f] = c(df), \quad f \in C^{\infty}(M),\] where $c :  \Cl(M) \to \End(E)$ denotes the Clifford action on $E$.
\end{definition}

For more on Dirac-type operators, see the monographs by Roe~\cite{Roe} and by Berline, Getzler and Vergne~\cite{BGV}, as well as the notes by Roepstorff and Vehns~\cites{RV1,RV2}. Note that~\cite{BGV} and~\cites{RV1,RV2} consider only odd Dirac-type operators on $\bZ/2\bZ$-graded Clifford module bundles, a restriction that is unnecessary for our purposes.

\subsection{Almost-commutative spectral triples}\label{cast}

In order to motivate our new definitions, let us recall the conventional definition of almost-commutative spectral triple, or, for convenience, \term{product almost-commutative spectral triples}.

Let $X$ be a compact spin manifold with fixed spinor bundle $S$ and corresponding Atiyah--Singer Dirac operator $\slashed{D}$, and let $F :=(\cA_{F},\cH_{F},D_{F})$ be a finite spectral triple, that is, a spectral triple with finite-dimensional Hilbert space $\cH_{F}$. If $X$ is even-dimensional, view $L^{2}(X,S)$ as $\bZ/2\bZ$-graded by the chirality operator, so that $C^{\infty}(X)$ acts on $L^{2}(X,S)$ by even operators, and $\slashed{D}$ is an odd operator. One can therefore define the noncommutative-geometric \term{Cartesian product} of $X$ and $F$ by
\[
	X \times F := (C^{\infty}(X) \gtimes \cA_{F}, L^{2}(X,S)\gtimes \cH_{F}, \slashed{D} \gtimes 1 + 1 \gtimes D_{F});
\]
a product almost-commutative spectral triple, then, is a spectral triple of this form. 

\begin{remark}
Note that this definition is not stable under inner fluctuations of the metric, for if $M = \sum_{i=1}^{n} a_{i}[D,b_{i}]$ for non-constant $a_{i}$, $b_{i} \in C^{\infty}(X) \otimes \cA_{F}$, then $M$ is generally not of the form $1 \gtimes T$ for some constant $T \in B(\cH_{F})$.
\end{remark}

Now, viewing $X \times \cH_{F}$ as a globally trivial $\bZ/2\bZ$-graded Hermitian vector bundle, and $X \times \cA_{F}$ as globally trivial unital \Star-algebra bundle, indeed, as a real unital \Star-algebra sub-bundle of $\End^{+}(X \times \cH_{F})$, define a self-adjoint Clifford module bundle $H$ and a \Star-algebra subbundle $A$ of $\End^{+}_{\Cl(X)}(H)$ by
\[
	H := S \gtimes (X \times \cH_{F}), \quad A := L \gtimes (X \times\cA_{F}),
\]
where $L$ is the trivial real unital \Star-algebra sub-bundle of $\End^{+}(S)$ defined by $L_{x} := \bR 1_{S_{x}}$. Finally, let $D = \slashed{D}_{X \times \cH_{F}} + 1 \gtimes D_{F}$, where $\slashed{D}_{X \times \cH_{F}}$ is the twisted Dirac operator on the twisted spinor bundle $H$ corresponding to the trivial connection on $X \times \cH_{F}$. Then $D$ is a symmetric Dirac-type operator on the self-adjoint Clifford module bundle $H$, and $A$ is a \Star-algebra subbundle of $\End_{\Cl(X)}^{+}(H)$, such that
\[
	X \times F = (C^{\infty}(X,A), L^{2}(X,H), D).
\]

On the other hand, suppose that $X$ is a compact oriented Riemannian manifold, $H$ is a self-adjoint Clifford module bundle over $X$, $A$ is a unital \Star-algebra sub-bundle of $\End_{\Cl(X)}^{+}(H)$, and $D$ is a symmetric Dirac-type operator on $H$. Then by standard analytic results about Dirac-type operators~\cite{Hig06}*{Theorem 3.23}, together with the fact that sections of $A$ are even bundle endomorphisms supercommuting with the Clifford action $H$, so that $[D,a]$ is a bundle endomorphism for all $a \in C^{\infty}(X,A)$, $(C^{\infty}(M,A),L^{2}(M,H),D)$ is a spectral triple of metric dimension $\dim X$. Thus, we may sensibly generalise the conventional definition of almost-commutative spectral triple as follows:

\begin{definition}
An \term{almost-commutative spectral triple} is a spectral triple of the form $(C^{\infty}(X,A), L^{2}(X,H), D)$, where $X$ is a compact oriented Riemannian manifold, $H$ is a self-adjoint Clifford module bundle, $A$ is a real unital \Star-algebra sub-bundle of $\End_{\Cl(X)}^{+}(H)$, and $D$ is a symmetric Dirac-type operator on $H$.
\end{definition}

\begin{remark}
This definition is stable under inner fluctuation of the metric, for a perturbation of a symmetric Dirac-type operator by a symmetric bundle endomorphism is a symmetric Dirac-type operator inducing the same Clifford action.
\end{remark}

Since the square of a Dirac-type operator is a generalised Laplacian, this definition manifestly lends itself to the perturbative computation of the spectral action~\cite{CC97} via heat kernel methods~\cite{Gil}  (see~\cite{vdD11} for a comprehensive account for product almost-commutative spectral triples). Another feature of this definition is that it encompasses non-trivial ``fibrations'' in the following sense:

\begin{lemma}
 Let $X$ and $F$ be as above. Let $G$ be a compact Lie group, and let $\rho$ be an action of $G$ on $F$, namely, a unitary representation of $G$ on $\cH_{F}$ such that for each $g \in G$, $\rho(g)\cA_{F}\rho(g)^{\ast} \subset \cA_{F}$, and $\rho(g)D_{F}\rho(g)^{\ast}=D_{F}$; if $F$ is even, we moreover require the action of $G$ to commute with the $\bZ/2\bZ$-grading. Let $\cP$ be a principal $G$-bundle over $X$, and let $\nabla_{\cP}$ be a connection on $\cP$. Define $H$ and $A$ by
\[
	H := S \gtimes (\cP \times_{\rho} \cH_{F}), \quad A := L \gtimes (\cP \times_{\rho} \cA_{F}),
\]
where $L$ is defined as above, and let $D = \slashed{D}_{\cP \times_{\rho} \cH_{F}} + 1 \gtimes D_{F}$, where $\slashed{D}_{\cP \times_{\rho} \cH_{F}}$ is the twisted Dirac operator on the twisted spinor bundle $H$ corresponding to the connection on $\cP \times_{\rho} \cH_{F}$ induced by $\nabla_{\cP}$. Then
\[
	 X \times_{(\cP,\nabla_{\cP})} F := (C^{\infty}(X,A), L^{2}(X,H), D)
\]
is an almost-commutative spectral triple.
\end{lemma}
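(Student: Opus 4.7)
The plan is to verify, in turn, each ingredient in the definition of almost-commutative spectral triple just given, so that the lemma will follow directly from the global-analytic discussion preceding that definition.

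First, I would note that since $X$ is compact spin, it is in particular compact oriented Riemannian. Because $\rho$ is unitary, $\cH_{F}$ is a Hermitian $G$-representation, so the associated bundle $\cP \times_{\rho} \cH_{F}$ inherits a Hermitian structure; if $F$ is $\bZ/2\bZ$-graded, it inherits a compatible grading from $\cH_F$, since by hypothesis $\rho$ commutes with the grading. The spinor bundle $S$ is a self-adjoint Clifford module bundle, and the Clifford action extends to $H = S \gtimes (\cP \times_{\rho} \cH_F)$ through the first factor, so $H$ is a self-adjoint Clifford module bundle.

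Next, I would argue that since $\rho(g)\cA_{F}\rho(g)^{\ast} \subset \cA_{F}$ for all $g \in G$, the associated bundle $\cP \times_{\rho} \cA_F$ is a well-defined real unital \Star-algebra sub-bundle of $\End(\cP \times_{\rho} \cH_F)$ consisting of even endomorphisms in the graded case. Trivially, $L$ is a real unital \Star-algebra sub-bundle of $\End^{+}(S)$ whose sections supercommute with the Clifford action; hence $A = L \gtimes (\cP \times_{\rho} \cA_F)$ is a real unital \Star-algebra sub-bundle of $\End_{\Cl(X)}^{+}(H)$, the Clifford action on $H$ going only through the first factor.

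Finally, I would check that $D$ is a symmetric Dirac-type operator on $H$. The twisted Dirac operator $\slashed{D}_{\cP \times_{\rho} \cH_F}$ is symmetric and of Dirac type by the standard theory of twisted Dirac operators, the connection on $\cP \times_{\rho} \cH_F$ induced by $\nabla_{\cP}$ being compatible with the Hermitian structure inherited from the unitary representation. Since $D_{F}$ is $G$-equivariant and self-adjoint (and, in the graded case, odd), it descends fiberwise to a symmetric bundle endomorphism of $\cP \times_{\rho} \cH_F$, so $1 \gtimes D_{F}$ is a symmetric bundle endomorphism of $H$ whose addition to $\slashed{D}_{\cP \times_{\rho} \cH_F}$ alters neither symmetry nor the principal symbol, and hence preserves both the Dirac-type condition $[D,f] = c(df)$ and symmetry. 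The lemma then follows by invoking the general principle established just before the definition. The main obstacle is really no more than bookkeeping: confirming the $G$-equivariance and parity compatibilities at each step so that the associated-bundle constructions and the graded tensor products all make sense.
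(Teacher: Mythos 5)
Your proposal is correct and follows essentially the same route as the paper's proof: both reduce the claim to checking the three ingredients of the definition, with the key observations being that the associated-bundle constructions inherit the Hermitian and algebra structures from the unitarity and equivariance hypotheses, that sections of $A = L \gtimes (\cP \times_{\rho} \cA_{F})$ supercommute with the Clifford action carried by the first factor, and that adding the bundle endomorphism $1 \gtimes D_{F}$ to the twisted Dirac operator preserves the Dirac-type and symmetry conditions. The paper's version is merely terser, delegating these verifications to the preceding discussion of the untwisted product case.
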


\begin{proof}
It follows immediately that $(C^{\infty}(X), L^{2}(X,H),D)$ is at least an almost-com\-mu\-ta\-tive spectral triple. However, since
\[
	C^{\infty}(X,A) = 1 \gtimes C^{\infty}(X,\cP \times_{\rho} \cA_{F}),
\]
sections of $A$ are even bundle endomorphisms supercommuting with the Clifford action on $H$, so that $(C^{\infty}(X,A), L^{2}(X,H), D)$ is also an almost-commutative spectral triple.
\end{proof}

We can view $X \times_{(\cP,\nabla_{\cP})} F$ as the product of $X$ and $F$ \term{twisted} by $(\cP,\nabla_{\cP})$; a concrete example of this construction has already been studied in detail by Boeijink and van Suijlekom~\cite{BvS10} in connection with Yang-Mills theory. It is also worth noting that the data $(\cP \times_{\rho} \cH_{F}, \nabla_{\cP}, D_{F})$ can be viewed as defining a non-trivial morphism $X \times_{(\cP,\nabla_{\cP})} F \to X$ in the category of spectral triples proposed by Mesland~\cites{Me09,Me09b}.

\subsection{Abstract almost-commutative spectral triples}\label{acstsec}

Now, we shall give an abstract definition of almost-commutative spectral triple, which shall depend upon an abstract definition of commutative spectral triple, identical to that proposed by Connes~\cites{Con96,Con08}, except for a weakening of the orientability condition.

Before continuing, it is worth recalling the following:

\begin{definition}
A spectral triple $(\cA,\cH,D)$ is called \term{regular} if $\cA + [D,\cA] \subset \cap_{k} \dom \delta^{k}$ for for the derivation $\delta : T \mapsto [\abs{D},T]$ on $B(H)$, and is called \term{strongly regular} if, in addition,  $\End_{\cA}(\cap_{k} \dom D^{k}) \subset \cap_{k} \dom \delta^{k}$.
\end{definition}

Now, let $(C^{\infty}(X,A),L^{2}(X,H),D)$ be an almost-commutative spectral triple. We may just as well consider it as being composed of two pieces:
\begin{enumerate}
	\item An almost-commutative spectral triple $(C^{\infty}(X),L^{2}(X,H),D)$ with \emph{commutative} algebra.
	\item A real unital \Star-algebra sub-bundle of $\End_{\Cl(X)}^{+}(H)$.
\end{enumerate}
In order to obtain an abstract definition of almost-commutative spectral triple, it therefore suffices to translate these two components into the language of noncommutative geometry.

Let us first consider the first component, namely, an almost-commutative spectral triple of the form $(C^{\infty}(X),L^{2}(X,H),D)$, where $H$ is a self-adjoint Clifford module bundle over a compact orientable Riemannian manifold $X$, and $D$ is a symmetric Dirac-type operator on $H$. We have already seen that $(C^{\infty}(X), L^{2}(X,H), D)$ is a spectral triple of metric dimension $\dim X$; that it is in fact strongly regular follows from~\cite{Con08}*{proof of Theorem 11.4}. Still more is true---one can check that it satisfies the following definition, very slightly modified, as mentioned above, from a definition of Connes's~\cites{Con96,Con08}:

\begin{definition}
Let $(\cA,\cH,D)$ be a strongly regular spectral triple of metric dimension $p \in \bN$, such that $\cA$ is commutative. We call $(\cA,\cH,D)$ a \term{commutative spectral triple} if the following conditions hold:
\begin{enumerate}
	\item {\bf Order one}: For any $a$, $b \in \cA$, $[[D,a],b] = 0$.
	\item {\bf Pre-orientability}: There exists an antisymmetric Hochschild p-cycle $c \in Z_{p}(\cA,\cA)$ such that $\chi = \pi_D(c)$ is a self-adjoint unitary on $\cH$ satisfying $a\chi = \chi a$ and $[D,a]\chi = (-1)^{p+1} \chi[D,a]$ for all $a \in \cA$.
	\item {\bf Finiteness}: One has that $\cHi := \cap_{m} \dom D^{m}$ is finitely generated and projective as a $\cA$-module.
	\item {\bf Absolute continuity}: The $\cA$-module $\cHi$ admits a Hermitian structure $\hp{\cdot,\cdot}$ defined by the equality $\ip{\xi,a\eta} = \ncint a\hp{\xi,\eta}\abs{D}^{-p}$ for $a \in \cA$, $\xi$, $\eta \in \cHi$.
\end{enumerate}
Moreover, if $p$ is even and $\{D,\chi\} = 0$, or if $p$ is odd and $\chi = 1$, then we shall call $(\cA,\cH,D)$ \term{orientable}.
\end{definition}

\begin{remark}
When dealing with a commutative spectral triple $(\cA,\cH,D)$, we may assume without loss of generality that $\cA$ is a complex \Star-subalgebra of $B(\cH)$, by replacing $\cA$ with $\cA + i\cA \subset B(\cH)$ in the case that $\cA$ is only a real \Star-subalgebra of $B(\cH)$.
\end{remark}

Indeed, in the case of $(C^{\infty}(X),L^{2}(X,H),D)$:
\begin{enumerate}
	\item The order one condition follows precisely since $D$ is a first-order differential operator.
	\item Pre-orientability follows from~\cite{Con08}*{proof of Theorem 11.4}, where the Hoch\-schild cycle $c$ is constructed from the volume form on $X$, and thus acts as the chirality operator.
	\item Finiteness follows since $\cap_{m} \dom D^{m} = C^{\infty}(X,H)$, which in turn follows from Sobolev theory applied to the elliptic operator $D$.
	\item Absolute continuity follows from the Connes trace formula~\cite{Con88} applied to pseudodifferential operators on the Hermitian vector bundle $H$ of the form $a \abs{D}^{-n}$, where $a \in C^{\infty}(X)$ (see~\citelist{\cite{GBVF}*{\S~9.4} \cite{Roe}*{Chapter 8}} for details).
\end{enumerate}
Although pre-orientability is an immediate consequence of orientability, it is a strictly weaker condition. Indeed, our spectral triple$(C^{\infty}(X),L^{2}(X,H),D)$ need not be orientable, as Connes assumed and as indeed holds for the canonical spectral triple $(C^{\infty}(N),L^{2}(N,S), \slashed{D})$ of a compact spin manifold $N$ with spinor bundle $S$. In the case of such an even-dimensional compact spin manifold $N$, if $E$ is a non-trivially $\bZ/2\bZ$-graded Hermitian vector bundle over $N$, and if $\slashed{D}_{E}$ is the twisted Dirac operator on $S \gtimes E$ corresponding to any self-adjoint superconnection on $E$, then $(C^{\infty}(N), L^{2}(N,S \gtimes E), \slashed{D}_{E})$ is not orientable, for any Hochschild p-cycle will act on $S \gtimes E$ by a bundle endomorphism of the form
\[
	T \gtimes 1 \in C^{\infty}(N,\End(S)) \gtimes C^{\infty}(N,\End(E)) \cong C^{\infty}(N,\End(S \gtimes E))
\]
with $T$ even, so that it cannot distinguish between $S \gtimes E^{+}$ and $S \gtimes E^{-}$, and thus cannot act as the $\bZ/2\bZ$-grading on $S \gtimes E$. In the case of an odd-dimensional compact oriented Riemannian manifold $N$, one can readily construct self-adjoint Clifford module bundles $H \to N$ such that the chirality operator (and thus any other non-trivial action of a Hochschild $p$-cycle) defines a non-trivial $\bZ/2\bZ$-grading on $H$~\citelist{\cite{LM}*{\S II.5} \cite{PR}*{\S 8}}.
Now, our spectral triple $(C^{\infty}(X),L^{2}(X,H),D)$ has one further property, precisely because $D$ is a Dirac-type operator on the self-adjoint Clifford module bundle $H$. Since the Clifford action on $H$ is given by
\[
	df \mapsto [D,f], \quad f \in C^{\infty}(X),
\]
it follows, in particular, that
\[
	\forall f \in C^{\infty}(X), \quad [D,f]^{2} = -g(df,df) \in C^{\infty}(X).
\]
Hence, $(C^{\infty}(X),L^{2}(X,H),D)$ satisfies the following:

\begin{definition}
Let $(\cA,\cH,D)$ be a commutative spectral triple of metric dimension $p$. If
\[
	\forall a \in \cA, \quad [D,a]^{2} \in \cA,
\]
then we shall say that $(\cA,\cH,D)$ is of \term{Dirac type}.
\end{definition}

\begin{remark}
The above definition, which makes sense even in the case of a noncommutative real spectral triple, is already fairly suggestive of the global-analytic definition of Dirac-type operator, and moreover lends itself to the following restatement. Let $(\cA,\cH,D)$ be a commutative spectral triple, let $\cHi = \cap_{k} \dom D^{k}$, and let $H_{1}(\cA,\cA)$ be the first Hochschild homology group of $\cA$ as a $\cA$-bimodule. By the order one condition and strong regularity, one can therefore define, as it were, a generalised symbol map $\sigma_{D} : H_{1}(\cA,\cA) \to \End_{\cA}(\cH^{\infty}) \subset B(\cH)$ by
\[
	\sigma_{D}([a_{0} \otimes a_{1}]) := a_{0}[D,a_{1}].
\]
Then $(\cA,\cH,D)$ is of Dirac type if and only if
\[
	\forall \eta \in H_{1}(\cA,\cA), \quad \sigma_{D}(\eta)^{2} \in \cA,
\]
a condition that is rather suggestive not only of the definition of Dirac-type operator, but also of the closely related $K$-theoretic notion of Clifford symbol.
\end{remark}

\begin{remark}
By the proof of Corollary~\ref{weakrecon}, together with~\cite{GBVF}*{Lemma 11.6}, it also follows that we could have equivalently defined that a commutative spectral triple $(\cA,\cH,D)$ is of Dirac type whenever the \Star-subalgebra of $B(\cH)$ generated by
\[
	\begin{cases}
		\cA + [D,\cA] &\text{if $(\cA,\cH,D)$ has even metric dimension,}\\
		\cA + [D,\cA][D,\cA] &\text{if $(\cA,\cH,D)$ has odd metric dimension,}
	\end{cases}
\]
has centre $\cA$. This alternative definition can be viewed as a direct translation of the fact that if $X$ is a compact oriented Riemannian manifold, then $\bCl(X)$, the complexification of $\Cl(X)$ if $\dim X$ is even, and the complexification of $\Cl^{+}(X)$ if $\dim X$ is odd, is a complex unital \Star-algebra bundle with fibre $M_{\dim X}(\bC)$.
\end{remark}

\begin{remark}
Ignoring real structures, Gracia-Bond{\'\i}a--V{\'a}rilly--Figueroa define~\cite{GBVF}*{Definition 11.2} an \term{irreducible} commutative spectral triple as a commutative spectral triple $(\cA,\cH,D)$ such that no non-trivial projection in $B(\cH)$ commutes with $\cA$, $D$, and $\pi_{D}(c)$. The discussion of~\cite{GBVF}*{\S~11.3} implies that an irreducible commutative spectral triple is, in particular, of Dirac type. However, being of Dirac type is a strictly weaker condition, for if $H \to X$ is a self-adjoint Clifford module bundle over $X$ compact oriented Riemannian but disconnected, and if $D$ is a symmetric Dirac-type operator on $H$, then $(C^{\infty}(X),L^{2}(X,H),D)$ is of Dirac type, but not irreducible.
\end{remark}

Let us now turn to the second component of an almost-commutative spectral triple, the real unital \Star-algebra bundle $A$. In order to translate this datum into noncommutative-geometric terms, we will need some sort of Serre--Swan-type result for such bundles, or more precisely, for bundles of finite-dimensional \Cstar-algebras. It turns out that such a result does indeed exist, but requires a slightly weaker notion of algebra bundle than the one we would like to use:

\begin{definition}[cf. \cite{BvS10}*{Definition 3.1}]
Let $\bK = \bR$ or $\bC$. An \emph{weak algebra bundle} is an $\bK$-vector bundle $A \to X$ together with morphism of vector bundles $\mu : A \otimes A \to A$ covering $\Id_{X}$ such that
\[
	\mu \circ \left(\Id_{A} \otimes \mu\right) = \mu \circ \left(\mu \otimes \Id_{A}\right),
\]
thereby inducing an $\bK$-algebra structure on each fibre; $(A,\mu)$ is called \emph{unital} if there exists $1_{A} \in C^{\infty}(X,A)$ such that for all $\xi \in C^{\infty}(X,A)$,
\[
	\mu(1_{A} \otimes \xi) = \mu(\xi \otimes 1_{A}) = \xi.
\]
If, moreover, there exists a conjugate-linear vector bundle endomorphism $J$ of $A \to X$ such that $J^{2} = 1$, inducing the structure of a \Star-algebra on each fibre, then $A \to X$ is called \emph{involutive}. 
\end{definition}

One can define categories of [unital] [involutive] weak algebra bundles, by defining a morphism of weak algebra bundles to be vector bundle morphism $T : (A \to X) \to (B \to X)$ covering $\Id_{X}$ such that
\[
	T \circ \mu_{A} = \mu_{B} \circ \left(T \otimes T\right),
\]
thereby inducing algebra homomorphisms $T_{x} : A_{x} \to B_{x}$ on the fibres; unital and involutive morphisms are defined in the analogous way. Finally, one can define a \emph{weak algebra sub-bundle} of a weak algebra bundle $(A,\mu_{A}) \to X$ to be a sub-bundle $B \to X$ of $A$ such that $\mu_{A}(B \otimes B) \subset B$, so that $\mu_{A}$ restricts to a weak algebra bundle structure on $B$; again, unital and involutive weak algebra sub-bundles are defined in the analogous way.

\begin{remark}
Every [unital] [\Star-]algebra bundle is a [unital] [involutive] weak algebra bundle, but not every [unital] [involutive] weak algebra bundle is a[n] [unital] [\Star-]algebra bundle, that is, a locally trivial bundle of finite-dimensional unital [\Cstar-]algebras, for different fibres need not be isomorphic as [unital] [\Star-]algebras. However, a weak [unital] [involutive] algebra sub-bundle of a[n] [unital] [\Star-]algebra bundle is necessarily a[n] [unital] [\Star-]algebra sub-bundle; since we will always be dealing with unital involutive weak algebra subbundles of the endomorphism bundles of Hermitian vector bundles, the difference between weak algebra bundle and algebra bundle will not affect us anywhere.
\end{remark}

Now that we have our weakened notion of algebra bundle, we can state the Serre--Swan theorem for weak algebra bundles, recently proved by Boeijink and van Suijlekom:

\begin{theorem}[Serre--Swan for weak algebra bundles~\cite{BvS10}*{Theorem 3.8}]\label{SerreSwan}
The map
\[
	A \to C^{\infty}(X,A)
\]
defines an equivalence of categories between the category of [unital] [involutive] weak $\bK$-algebra bundles and the category of [unital] finitely-generated $C^{\infty}(X,\bK)$-module [\Star-]algebras over $\bK$.
\end{theorem}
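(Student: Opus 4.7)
The plan is to derive this from the classical Serre--Swan theorem by transporting algebraic structure across the equivalence of categories
\[
	\Gamma : \set{\text{$\bK$-vector bundles over $X$}} \to \set{\text{finitely generated projective $C^{\infty}(X,\bK)$-modules}},
\]
where $\Gamma(E) = C^{\infty}(X,E)$. The key auxiliary fact is the natural isomorphism $\Gamma(E \otimes F) \cong \Gamma(E) \otimes_{C^{\infty}(X,\bK)} \Gamma(F)$ of $C^{\infty}(X,\bK)$-modules for vector bundles $E$, $F \to X$, which follows from the projectivity of $\Gamma(E)$ and $\Gamma(F)$ together with local triviality.

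First, given a weak algebra bundle $(A,\mu)$, I would equip $\Gamma(A)$ with the $C^{\infty}(X,\bK)$-bilinear multiplication obtained by postcomposing $\Gamma(\mu) : \Gamma(A) \otimes_{C^{\infty}(X,\bK)} \Gamma(A) \to \Gamma(A)$ with the isomorphism above; associativity of $\mu$ fibrewise translates, via the equivalence, to associativity of $\Gamma(\mu)$, so $\Gamma(A)$ becomes a $C^{\infty}(X,\bK)$-algebra. Units and involutions are transported in the obvious pointwise fashion, and morphisms of (unital, involutive) weak algebra bundles clearly yield morphisms of the corresponding (unital, $\ast$-)algebras. This gives a functor in one direction.

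Conversely, given a finitely generated projective $C^{\infty}(X,\bK)$-module algebra $M$, classical Serre--Swan produces a $\bK$-vector bundle $A \to X$ with $\Gamma(A) \cong M$ as $C^{\infty}(X,\bK)$-modules. The algebra multiplication on $M$ is $C^{\infty}(X,\bK)$-bilinear, hence factors as a $C^{\infty}(X,\bK)$-linear map $M \otimes_{C^{\infty}(X,\bK)} M \to M$, and via the natural isomorphism this corresponds to a $C^{\infty}(X,\bK)$-linear map $\Gamma(A \otimes A) \to \Gamma(A)$; by the equivalence of categories (now applied to morphisms), this in turn comes from a unique vector bundle morphism $\mu : A \otimes A \to A$ covering $\Id_{X}$. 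Associativity of $\mu$ fibrewise is equivalent to associativity of the induced section-level product, which holds by assumption on $M$. Units and involutions are produced analogously: a unit $1_{M} \in M = \Gamma(A)$ is a section $1_{A}$, and an involution on $M$ corresponds to a conjugate-linear section-level map squaring to the identity, hence to a conjugate-linear bundle endomorphism $J$ of $A$ with $J^{2} = 1$.

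Finally, I would verify that these two constructions are mutually inverse, which is immediate once one notes that the classical Serre--Swan equivalence is a $C^{\infty}(X,\bK)$-linear tensor equivalence, and that naturality of $\Gamma(E \otimes F) \cong \Gamma(E) \otimes_{C^{\infty}(X,\bK)} \Gamma(F)$ intertwines the two multiplication data. The main technical point---and the only real obstacle---is verifying this naturality carefully enough that associativity, unitality, and involutivity transfer without ambiguity; once that is in place, the rest is formal diagram-chasing from the classical Serre--Swan theorem.
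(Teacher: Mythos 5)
The paper offers no proof of this statement---it is imported verbatim as Theorem~3.8 of Boeijink--van Suijlekom \cite{BvS10}---and your argument is correct and is essentially the proof given there: transport the fibrewise multiplication, unit, and involution across the classical Serre--Swan equivalence using the natural isomorphism $C^{\infty}(X,E\otimes F)\cong C^{\infty}(X,E)\otimes_{C^{\infty}(X,\bK)}C^{\infty}(X,F)$, with faithfulness of the section functor converting fibrewise identities (associativity, etc.) into section-level ones and back. The one point worth flagging is that your construction of the inverse functor uses projectivity of the module, which the statement as quoted omits but which is present in \cite{BvS10} and implicit in the paper's later applications; also note that for $\bK=\bC$ the involution must be handled via the conjugate bundle, as a conjugate-linear bundle map is not a morphism in the category classical Serre--Swan speaks of.
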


We can now proceed to characterise our real unital \Star-algebra bundle $A$, or rather, the finitely-generated projective $C^{\infty}(X,\bR)$-module \Star-algebra $C^{\infty}(X,A)$, as follows. First, by Theorem~\ref{SerreSwan}, $A \to X$ is a unital \Star-algebra sub-bundle of $\End(H)$ if and only if $C^{\infty}(X,A)$ is a finitely-generated projective sub-$C^{\infty}(X)$-module \Star-algebra of $C^{\infty}(X,\End(E)) = \End_{C^{\infty}(X)}(C^{\infty}(X,H))$; in the even ($\bZ/2\bZ$-graded) case, we need simply specify in addition that $C^{\infty}(X,A)$ consists of even operators. Given this, $A \to X$ is a unital \Star-algebra sub-bundle of $\End_{\Cl(X)}^{+}(H)$ if and only if every section of $A$ supercommutes with the Clifford action, if and only if for all $a \in C^{\infty}(X,A)$, $b \in C^{\infty}(X)$,
\[
	0 = [c(db),a] = [[D,b],a].
\]
Thus, we may characterise $C^{\infty}(X,A)$ as an even finitely-generated projective sub-$C^{\infty}(X,\bR)$-module \Star-algebra of $\End_{C^{\infty}(X)}(C^{\infty}(X,H))$, satisfying the following generalised order-one condition:
\[
	\forall a \in C^{\infty}(X,A) \; b \in C^{\infty}(X), \quad  [[D,b],a] = 0. 
\]

Putting everything together, we therefore see that an almost-commuta\-tive spectral triple $(C^{\infty}(X,A),L^{2}(X,H),D)$ satisfies the following abstract definition:

\begin{definition}
Let $(\cA,\cH,D)$ be a spectral triple, and let $\cB$ be a central unital $\ast$-subalgebra of $\cA$. We call $(\cA,\cH,D)$ an \term{abstract almost-commutative spectral triple} with \term{base} $\cB$ if the following three conditions hold:
\begin{enumerate}
	\item $(\cB,\cH,D)$ is a commutative spectral triple of Dirac type.
	\item $\cA$ is an even finitely generated projective $\cB$-module $\ast$-subalgebra of the algebra $\End_{\cB+i\cB}(\cHi)$, for $\cHi = \cap_{k \in \bN} \dom D^{k}$.
	\item For all $a \in \cA$, $b \in \cB$, $[[D,b],a] = 0$.
\end{enumerate}
\end{definition}

\subsection{Reconstruction theorems}

We can now finally state our reconstruction theorem for almost-commutative spectral triples, which establishes the equivalence of our two definitions:

\begin{theorem}\label{acst}
Let $(\cA,\cH,D)$ be a spectral triple, and let $\cB$ be a central unital $\ast$-subalgebra of $\cA$. Then $(\cA,\cH,D)$ is an abstract almost-commutative spectral triple with base $\cB$ if and only if it is an almost-commutative spectral triple, that is, if and only if there exist a compact oriented Riemannian manifold $X$, a self-adjoint Clifford module bundle $H$ over $X$, and a real unital $\ast$-algebra sub-bundle $A$ of $\End_{\Cl(X)}^{+}(H)$, such that $\cB = C^{\infty}(X)$, $\cA = C^{\infty}(X,A)$, $\cH = L^{2}(X,H)$, and $D$ is an essentially self-adjoint Dirac-type operator on $H$.
\end{theorem}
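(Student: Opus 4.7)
The forward direction essentially recapitulates the discussion of Section~\ref{acstsec}: given a concrete almost-commutative spectral triple $(C^{\infty}(X,A), L^{2}(X,H), D)$, I would take $\cB = C^{\infty}(X)$ and verify that $(\cB, \cH, D)$ is a commutative spectral triple of Dirac type (strong regularity, the order-one condition, pre-orientability via the volume form, finiteness via $\cHi = C^{\infty}(X,H)$ by elliptic regularity, absolute continuity via the Connes trace formula, and the Dirac-type property because $[D,f]^{2} = -g(df,df) \in C^{\infty}(X)$). The remaining module-theoretic conditions on $C^{\infty}(X,A)$ then follow directly from $A$ being a unital $\ast$-algebra sub-bundle of $\End_{\Cl(X)}^{+}(H)$.

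The substantive direction is the converse, and the plan is to reduce it to two earlier tools. First, I would apply the paper's weakened reconstruction theorem for commutative spectral triples of Dirac type (Corollary~\ref{weakrecon}) to the commutative sub-triple $(\cB, \cH, D)$: this will produce a compact oriented Riemannian manifold $X$, a self-adjoint Clifford module bundle $H \to X$, and an essentially self-adjoint Dirac-type operator on $H$, together with identifications $\cB \cong C^{\infty}(X)$, $\cH \cong L^{2}(X,H)$ and $\cHi \cong C^{\infty}(X,H)$ (the last by elliptic regularity on the compact $X$). Under these, $\End_{\cB + i\cB}(\cHi) \cong C^{\infty}(X,\End(H))$, so by the Serre--Swan theorem for weak algebra bundles (Theorem~\ref{SerreSwan}), the finitely generated projective $\ast$-subalgebra $\cA$ corresponds to a unital involutive weak algebra sub-bundle $A \to X$ of $\End(H)$ with $\cA = C^{\infty}(X,A)$; the remark following Theorem~\ref{SerreSwan} then promotes $A$ to a genuine unital $\ast$-algebra sub-bundle. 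Evenness of elements of $\cA$ in the $\bZ/2\bZ$-graded case forces $A \subset \End^{+}(H)$, while the condition $[[D,b],a] = 0$ for $a \in \cA$, $b \in \cB$ translates, via $[D,b] = c(db)$, into supercommutation of each fibre $A_{x}$ with every $c(df)|_{x}$. Since $\{df|_{x} : f \in C^{\infty}(X)\}$ spans $T_{x}^{*}X$, it generates $\Cl_{x}(X)$ as a unital algebra, so sections of $A$ supercommute with the full Clifford action, yielding $A \subset \End_{\Cl(X)}^{+}(H)$ as required.

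The main obstacle is the first step: the appeal to Corollary~\ref{weakrecon}. Extracting the manifold, the Clifford module bundle $H$, and the Dirac-type structure on $D$ simultaneously from the abstract data requires both the weakened orientability axiom (to admit Clifford module bundles that do not arise from a spin structure) and the Dirac-type axiom (to recover the Clifford action from the principal symbol of $D$); this is where all the genuinely analytic work lives. Once that step is available, the remainder is a fairly mechanical translation between the noncommutative and commutative sides via Serre--Swan and the fibrewise generation of $\Cl(X)$ by exact $1$-forms.
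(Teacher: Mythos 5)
Your proposal is correct and follows essentially the same route as the paper: the concrete-to-abstract direction is exactly the discussion preceding the definition of abstract almost-commutative spectral triples, and the abstract-to-concrete direction is obtained by applying Corollary~\ref{weakrecon} to $(\cB,\cH,D)$, then Theorem~\ref{SerreSwan} to condition (2) to produce the bundle $A$, and finally condition (3) to place $A$ inside $\End_{\Cl(X)}^{+}(H)$. Your observation that all the analytic weight sits in Corollary~\ref{weakrecon} matches the paper's structure, which isolates that content (including the stability lemmas of the appendix) precisely so that this theorem reduces to the mechanical translation you describe.
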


Since every abstract almost-commutative spectral triple is built on a commutative spectral triple, one would expect that our reconstruction theorem is a consequence of the following result, conjectured by Connes in 1996~\cite{Con96} and finally proved by him in 2008~\cite{Con08}:

\begin{theorem}[Reconstruction Theorem~\cite{Con08}*{Theorem 11.3}]\label{recon}
Let $(\cA,\cH,D)$ be an orientable commutative spectral triple of metric dimension $p$. Then there exists a smooth compact oriented $p$-manifold $X$ such that $\cA = C^{\infty}(X)$.
\end{theorem}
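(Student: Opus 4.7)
The plan is to build $X$ in stages: first a compact Hausdorff space, then a smooth $p$-manifold, then a Riemannian oriented $p$-manifold, and finally identify $\cA$ with $C^{\infty}(X)$. After complexifying a real $\cA$ to $\cA + i\cA$, commutativity together with the norm inherited from $B(\cH)$ makes $\overline{\cA}$ a unital commutative \Cstar-algebra, so Gel'fand--Na{\u\i}mark gives $\overline{\cA} = C(X)$ for $X$ the character space, a compact Hausdorff space. The remaining work is to produce a smooth oriented structure of dimension $p$ on $X$ and show that $\cA$ is exactly the smooth functions.

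The central step is extracting coordinate charts from the orientability cycle $c = \sum_{j} a_0^{j} \otimes a_1^{j} \otimes \cdots \otimes a_p^{j}$ satisfying $\pi_{D}(c) = \chi$. Fixing a character $x_0 \in X$, the identity $\chi = \sum_{j} a_0^{j} [D, a_1^{j}] \cdots [D, a_p^{j}]$, combined with antisymmetry of $c$ and with $\chi$ being a self-adjoint unitary, forces some summand $j_0$ to have the ``Clifford gradients'' $[D, a_1^{j_0}], \ldots, [D, a_p^{j_0}]$ linearly independent at $x_0$ in the sense that their symbol at $x_0$ is nondegenerate. The order-one axiom then promotes $x \mapsto (a_1^{j_0}(x), \ldots, a_p^{j_0}(x))$ to a local homeomorphism onto an open subset of $\bR^{p}$ near $x_0$, and compactness of $X$ produces a finite atlas. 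Strong regularity, which places $\cA + [D, \cA]$ inside $\cap_{k} \dom \delta^{k}$ for $\delta = [\abs{D}, \cdot]$, upgrades the atlas to a smooth one via a pseudodifferential parametrix for $D$ identifying $\cap_{k} \dom \delta^{k}$ locally with smooth functions. The Riemannian metric is recovered from absolute continuity: the functional $a \mapsto \ncint a \abs{D}^{-p}$ is a Riemannian integral against a positive smooth density, and the associated Wodzicki residue calculation on a compact $p$-manifold pins down the metric uniquely. Orientability of $X$, the softest component, follows from $\{D, \chi\} = 0$ in the even case or $\chi = 1$ in the odd case, which produces a nowhere-vanishing volume $p$-form.

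The main obstacle, and the step where earlier attempts by Rennie and V\'arilly foundered, is establishing $\cA = C^{\infty}(X)$ rather than merely $\cA \subseteq C^{\infty}(X)$. The easy containment $\cA \subseteq C^{\infty}(X)$ comes from strong regularity once the smooth structure has been installed. For the reverse containment, I would combine finiteness, which presents $\cHi = \cap_{k} \dom D^{k}$ as a finitely generated projective $\cA$-module, with absolute continuity, in order to realize $\cHi$ as the smooth sections of an honest smooth Hermitian vector bundle over $X$. A careful parametrix argument together with closedness of $\cA$ under the holomorphic (and ultimately smooth) functional calculus inherited from $C(X)$ would then force $\cA$ to fill out all of $C^{\infty}(X)$, completing the reconstruction.
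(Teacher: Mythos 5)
There is nothing in the paper to compare your proposal against: Theorem~\ref{recon} is not proved in this paper at all. It is imported verbatim from Connes's \emph{On the spectral characterization of manifolds} (cited as Theorem 11.3 there) and used as a black box; the paper's own contribution at this point is Corollary~\ref{weakrecon}, which reduces the weaker pre-orientability hypothesis to the orientable case by splitting $D = D_0 + M$ and invoking the stability lemmas of the appendix before applying Theorem~\ref{recon}. So the relevant question is whether your sketch would actually prove Connes's theorem, and as written it would not.

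The two places where you assert rather than argue are exactly the places where the known proof is hard. First, the claim that the order-one axiom ``promotes'' $x \mapsto (a_1^{j_0}(x),\dots,a_p^{j_0}(x))$ to a local homeomorphism is the crux of the whole theorem and is precisely where the Rennie--V\'arilly approach foundered: the order-one condition gives no injectivity or openness whatsoever. Connes's argument for openness of these candidate charts is a delicate measure-theoretic and multiplicity-counting argument, using absolute continuity, the bound on the spectral multiplicity of $\cHi$ as an $\cA$-module coming from finiteness, and the antisymmetry of $c$ to control the number of preimages of a generic point; none of that appears in your outline. Second, your passage to a smooth structure via ``a pseudodifferential parametrix for $D$'' is circular at this stage, since one does not yet know that $D$ is a differential operator on anything --- that is an output of the theorem, not an input; the smoothness of the transition functions has to be extracted from regularity (iterated commutators with $\abs{D}$) by purely functional-analytic means. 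Finally, the equality $\cA = C^{\infty}(X)$ is again asserted (``a careful parametrix argument \dots would then force \dots''); in Connes's proof this follows from stability of $\cA$ under smooth functional calculus together with the fact that the coordinate charts already lie in $\cA$, and it needs to be said. In short, your outline reproduces the broad architecture of the known proof but leaves all three genuinely difficult steps as gaps.
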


This is indeed the case, but we must first find a way to weaken the orientability hypothesis to a pre-orientability hypothesis; we shall also find it useful to extract more information out of the reconstruction theorem for commutative spectral triples. Indeed, we have the following, which also incorporates the results of~\cite{GBVF}*{Chapter 11}:

\begin{corollary}\label{weakrecon}
Let $(\cA,\cH,D)$ be a commutative spectral triple of metric dimension $p$. Then there exists a smooth compact oriented $p$-manifold $X$ and a Hermitian vector bundle $H$ over $X$ such that $\cA = C^{\infty}(X)$, $\cH = L^{2}(X,H)$, and $D$ is an essentially self-adjoint first-order differential operator on $H$. Moreover, if $(\cA,\cH,D)$ is of Dirac type, then $H$ is a self-adjoint Clifford module bundle and $D$ is a Dirac-type operator.
\end{corollary}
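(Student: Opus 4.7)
The plan is to deduce the corollary from Connes's reconstruction theorem (Theorem~\ref{recon}) by first upgrading pre-orientability to orientability via a perturbation of $D$, then reading off the Hermitian bundle and first-order-differential-operator structure from the remaining axioms, and finally extracting the Clifford structure from the Dirac-type hypothesis.

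For the orientability reduction in even metric dimension $p$, with $\chi := \pi_{D}(c)$ the self-adjoint unitary supplied by pre-orientability (so $\chi^{2} = 1$), I put
\[
	D' := \tfrac12(D - \chi D \chi).
\]
A short computation using $a\chi = \chi a$ and $[D,a]\chi = -\chi[D,a]$ gives $[D',a] = [D,a]$ for every $a \in \cA$ and $\{D',\chi\} = 0$, so $\pi_{D'}(c) = \chi$ now anticommutes with $D'$; hence $(\cA,\cH,D')$, if it remains a spectral triple, is orientable in Connes's original sense. The perturbation $D - D' = \tfrac12(D + \chi D \chi)$ is symmetric and commutes with $\cA$, so the stability results for spectral triples under symmetric $\cA$-commuting perturbations of the Dirac operator, developed earlier in this paper following Chakraborty--Mathai, preserve self-adjointness, compact resolvent, strong regularity, finiteness, absolute continuity, order one, pre-orientability, and metric dimension. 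For odd $p$, pre-orientability instead gives $\chi$ commuting with $[D,\cA]$, and the analogous $\tilde D := \tfrac12(D + \chi D \chi)$ satisfies $[\tilde D,a] = [D,a]$ and $[\tilde D,\chi] = 0$; the $\pm 1$ eigenspaces of $\chi$ are then invariant under both $\cA$ and $\tilde D$, and $\pm c$ acts as $1$ on the two pieces, so each sub-triple is orientable.

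Applying Theorem~\ref{recon} (directly in the even case, and on each eigenspace in the odd case, then gluing along the clopen decomposition of the Gelfand spectrum of $\overline{\cA}$ cut out by $\chi$) yields a smooth compact oriented $p$-manifold $X$ with $\cA = C^{\infty}(X)$. Strong regularity, finiteness, and absolute continuity then realise $\cHi = \cap_{k}\dom D^{k}$ as a finitely generated projective $C^{\infty}(X)$-module with a canonical Hermitian inner product, so Serre--Swan provides a Hermitian vector bundle $H$ with $\cHi = C^{\infty}(X,H)$, whose $L^{2}$-completion against the Riemannian volume is $\cH$. The order-one condition forces each $[D,a]$ to lie in $\End_{\cA}(\cHi)$, i.e., to act as a bundle endomorphism of $H$, so $D$ is an essentially self-adjoint first-order differential operator on $H$, which settles the first half of the corollary.

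For the Dirac-type case, the hypothesis $[D,f]^{2} \in \cA$ combined with polarisation makes $g(df_{1},df_{2}) := -\tfrac12\{[D,f_{1}],[D,f_{2}]\}$ a symmetric scalar-valued bilinear form on $T^{*}X$; its positivity and its agreement with the metric underlying $X$ come from the identity $\|[D,f]\| = \operatorname{Lip}_{g}(f)$, so $g$ is a Riemannian metric. The map $df \mapsto [D,f]$ then extends by the Clifford relations to a Clifford action of $\Cl(X)$ on $H$, and symmetry of $D$ combined with $f = f^{*}$ forces $[D,f]$ to be fibrewise skew-adjoint, so $H$ is a self-adjoint Clifford module bundle and $D$ a Dirac-type operator on it. The main obstacle throughout is the orientability reduction: I need the invoked stability machinery to cover the perturbations $D \leadsto D'$ and $D \leadsto \tilde D$ in full, and in the odd case to confirm that $\chi$ really corresponds to the characteristic function of a clopen subset of $\operatorname{Spec}(\overline{\cA})$, so that the two smooth pieces produced by Theorem~\ref{recon} actually glue into a single smooth manifold.
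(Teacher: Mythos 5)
Your even-dimensional reduction is exactly the paper's: your $D'=\tfrac12(D-\chi D\chi)$ is the paper's $D_{0}=\tfrac12[D,\chi]\chi$, and the subsequent appeal to the stability lemmas is the intended route (one small omission: before invoking Kato--Rellich and the other stability results you must note that the perturbation $M=\tfrac12(D+\chi D\chi)=\tfrac12\{D,\chi\}\chi$ is \emph{bounded}, which follows from strong regularity since $M\in\End_{\cA}(\cHi)$; commuting with $\cA$ and being symmetric is not enough, and for stability of strong regularity you also need to check the hypothesis on $[D_{0}^{2}-D^{2},\cdot\,]$). The first half of the corollary and the extraction of the Clifford structure likewise follow the paper, although your justification of non-degeneracy of $Q(df)=-[D,f]^{2}$ via $\norm{[D,f]}=\mathrm{Lip}_{g}(f)$ is circular as stated --- non-degeneracy is precisely what one must prove, and it comes from pre-orientability (the top-degree Hochschild cycle acting as a unitary), as in \cite{GBVF}*{p.~504}.

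The genuine gap is your odd-dimensional case. You propose to split $\cH$ into the $\pm1$ eigenspaces of $\chi$, apply Theorem~\ref{recon} to each piece, and glue ``along the clopen decomposition of the Gelfand spectrum of $\overline{\cA}$ cut out by $\chi$.'' But $\chi$ lies only in the commutant of $\cA$, not in $\overline{\cA}$, so its spectral projections need not correspond to clopen subsets of $\mathrm{Spec}(\overline{\cA})$ at all: the paper itself points out (citing \cite{LM}*{\S II.5} and \cite{PR}*{\S 8}) that on a \emph{connected} odd-dimensional manifold the chirality element can define a non-trivial $\bZ/2\bZ$-grading of the Clifford module bundle, in which case both eigenspaces are modules over the full algebra $C^{\infty}(X)$ and there is nothing to glue --- while faithfulness of $\cA$ on each eigenspace, and the finiteness, absolute-continuity and metric-dimension axioms for each restricted triple, are all left unverified. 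The paper's proof avoids the decomposition entirely: after passing to $D_{0}=\tfrac12(D+\chi D\chi)$ (which commutes with $\chi$), it replaces $D_{0}$ by $D_{1}=\chi D_{0}$ on the \emph{whole} Hilbert space. Since $\chi$ commutes with $\cA$ and with $D_{0}$, one has $D_{1}^{2}=D_{0}^{2}$ and $[D_{1},a]=\chi[D_{0},a]$, so all axioms transfer, and $\pi_{D_{1}}(c)=\chi^{p}\pi_{D_{0}}(c)=\chi^{p+1}=1$ because $p$ is odd; thus $(\cA,\cH,D_{1})$ is orientable and Theorem~\ref{recon} applies directly. You should replace your eigenspace argument with this twist by $\chi$.
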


\begin{NB}
In the following proof, we will use several technical results on the stability of certain properties of spectral triples under suitable perturbation of the Dirac operator found in the appendix.
\end{NB}

\begin{proof}
First, suppose that $p$ is even. Now, on $\cHi = \cap_{k} \dom D^{k}$, we can write $D = D_{0} + M$ for $D_{0} = \tfrac{1}{2}[D,\chi]\chi$ and $M = \tfrac{1}{2}\{D,\chi\}\chi$. By direct computation and strong regularity, one finds that $M \in \End_{\cA}(\cHi) \subset B(\cH)$ and is self-adjoint. Hence, $(\cA,\cH,D_{0})$ is a spectral triple by Lemma~\ref{lem1}, and is of metric dimension $p$ by Lemma~\ref{lem2}. Since $D_{0}^{2} - D^{2} = -MD_{0}-D_{0}M = -\tfrac{1}{2}(D^{2}-\chi D^{2} \chi)$ on $\cHi$, $[D_{0}^{2}-D^{2},T] \in \cD_{k+1}$ for any $T \in \cD_{k}$, where $\cD$ is the extended algebra of differential operators for $(\cA,\cH,D)$ (see Definition~\ref{algdif}). Hence, by Lemma~\ref{lem4}, $(\cA,\cH,D_{0})$ is strongly regular. We can now check the other axioms for a commutative spectral triple in turn:
\begin{enumerate}
	\item Since $M$ commutes with $\cA$, the order one condition continues to hold.
	\item Again, since $M$ commutes with $\cA$, pre-orientability continues to hold with the same $c$ and $\chi$, and $\{D_{0},\chi\} = 0$ by construction of $D_{0}$, yielding orientability.
	\item Since $(\cA,\cH,D)$ is strongly regular and since $M \in \End_{\cA}(\cHi)$, Corollary~\ref{lem4a} allows us to apply Lemma~\ref{lem3} and conclude that \[\cHi := \cap_{k} \dom D^{k} = \cap_{k} \dom D_{0}^{k},\] so that finiteness continues to hold.
	\item Since $M \in \End_{\cA}(\cHi)$, it follows that absolute continuity continues to hold by Lemma~\ref{lem5}, with $\ncint T \abs{D}^{-p} = \ncint T \abs{D_{0}}^{-p}$ for all $T \in B(\cH)$.
\end{enumerate}
Thus, $(\cA,\cH,D_{0})$ is indeed an orientable commutative spectral triple, so that we may apply the Reconstruction Theorem to $(\cA,\cH,D_{0})$ to obtain $X$.

Now, suppose that $p$ is odd. Write $D = D_0 + M$ for
\[
	D_0 := \frac{1}{2}(D + \chi D \chi), \quad M := \frac{1}{2}(D-\chi D \chi).
\]
Since $\chi$ commutes with elements of $[D,\cA]$, $M$ commutes with $\cA$, and hence $M$ is a self-adjoint element of $\End_{\cA}(\cHi)$. The argument for the even case, \emph{mutatis mutandis}, then shows that $(\cA,\cH,D_0)$ is still a commutative spectral triple of metric dimension $p$, with $\pi_{D_0}(c) = \pi_D(c) = \chi$ and $D_0 \chi = \chi D_0$. In particular, since $\chi$ commutes with $D_0$ and with all elements of $\cA$, $D_1 = \chi D_0$ is a self-adjoint operator on $\cH$ satisfying $D_1^2 = D_0^2$ and $[D_1,a] = \chi [D_0,a]$ for all $a \in \cA$. Because of this, all the axioms for a commutative spectral triple of metric dimension $p$ immediately follow for $(\cA,\cH,D_1)$ except for pre-orientability, but even then, since $p$ is odd, $\pi_{D_1}(c) = \chi^p \pi_{D_0}(c) = \chi^{p+1} = 1$, so that $(\cA,\cH,D_1)$ is, in fact, orientable. Hence, we may apply the Reconstruction Theorem to $(\cA,\cH,D_1)$ to obtain $X$.

Now, by the Serre--Swan theorem, there exists a smooth vector bundle $H$ on $M$ such that $\cHi = C^{\infty}(X,H)$, so that the Hermitian structure on $\cHi$ induces a Hermitian structure on the vector bundle $H$. Moreover, by~\cite{GBVF}*{\S 11.2, Proposition 11.5} and our argument above concerning stability of absolute continuity, we have that
\[
	\forall f \in C^{\infty}(X), \quad \ncint f \abs{D}^{-p} = \int f d\nu,
\]
where $d\nu$ is a constant multiple of the volume form on $X$, so that, in particular, $\cH = L^{2}(X,H,d\nu)$. Finally, by the order one condition~\cite{GBVF}*{p.\ 501}, $D$ is indeed an essentially self-adjoint first-order differential operator on $H$.

At last, suppose in addition that $(\cA,\cH,D)$ is of Dirac type. By this assumption and the order one condition, we can define a positive semi-definite $\bR$-valued quadratic form $Q$ on $T^{\ast}M$ by
\[
	Q(df) := -[D,f]^{2}, \quad f \in C^{\infty}(X,\bR).
\]
Thus, in order to construct a Riemannian metric, with respect to which $df \mapsto [D,f]$ defines a self-adjoint Clifford action on $H$ making $D$ into a Dirac-type operator on $H$, it suffices to show that $Q$ is non-degenerate. However, by~\cite{GBVF}*{p. 504}, \latin{mutatis mutandis}, pre-orientability of $(\cA,\cH,D)$ implies that $Q$ is indeed non-degenerate, and therefore yields the required Riemannian metric.
\end{proof}

At last, we can prove our reconstruction theorem:

\begin{proof}[Proof of Theorem~\ref{acst}]
By our discussion in \S~\ref{acstsec}, it suffices to prove the forward direction. Hence, let $(\cA,\cH,D)$ is an abstract almost-commutative spectral triple with base $\cB$ and metric dimension $p$.

First, by Corollary~\ref{weakrecon} applied to $(\cB,\cH,D)$, there exist a compact oriented Riemannian $p$-manifold $X$ and a self-adjoint Clifford module bundle $H$ over $X$ such that $\cB = C^{\infty}(X)$, $\cH = L^{2}(X,H)$, and $D$ is an essentially self-adjoint Dirac-type operator on $H$.

Next, condition (2) for almost-commutative spectral triples, together with Theorem~\ref{SerreSwan}, implies that $\cA = C^{\infty}(X,A)$ for $A$ a real unital involutive weak algebra bundle over $X$.

Finally, condition (3) for almost-commutative spectral triples, together with the fact that $D$ is a Dirac-type operator on the Clifford module bundle $H$, implies that $A$ can be identified as a real unital \Star-algebra sub-bundle of $\End_{\Cl(X)}^{+}(H)$, as required.
\end{proof}

\section{Next steps}

As we have seen, we can obtain a reconstruction theorem for almost-commutative spectral triples, indeed as a consequence of Connes's reconstruction theorem for commutative spectral triples, but only by expanding the definition of almost-com\-mu\-ta\-tive spectral triples. This generalisation does come with some advantages, though. 

On the one hand, our definition naturally accommodates non-trivial finite ``fibrations'' of spectral triples, examples of which have already appeared in the literature~\cite{BvS10}. If product almost-commutative spectral triples offer the simplest non-trivial example of a product of spectral triples (even if we do not yet have a standard category of spectral triples), then perhaps the ``twisted products'' of Section~\ref{cast} might be seen as the simplest non-trivial examples of a non-trivial fibration with fixed fibre in a candidate category of spectral triples, whilst inner fluctuations of the metric of such almost-commutative spectral triples might be seen as giving examples of more slightly general fibrations. Certainly, as we have observed above, given a compact spin manifold $X$, a finite spectral triple $F$, and twisted product $X \times_{(\cP,\nabla_{P})} F$ of $X$ and $F$, there is an obvious morphism $X \times_{(\cP,\nabla_{P})} F \to X$ in the $KK$-theoretic category proposed by Mesland~\cites{Me09,Me09b}. The significance of this example in defining fibrations in Mesland's category, however, remains to be seen.

On the other hand, our definition is manifestly global-analytic, based emphatically on the theory of Dirac-type operators, and is stable under inner fluctuation of the metric. As such, it lends itself readily to a very general study of the perturbative spectral action on almost-commutative spectral triples, using all of the global-analytic computational techniques used thus far on product almost-commutative spectral triples~\cite{vdD11}. It also lends itself to the systematic use of superconnections, whose relevance to the study of Dirac-type operators has been established by the monograph of Berline--Getzler--Vergne~\cite{BGV}; indeed, the use of superconnections in understanding (and even formulating) the noncommutative geometric Standard Model was already proposed by Figueroa--Gracia-Bond\'\i a--Lizzi--V\'arilly in 1998~\cite{FGBLV}. However, a full understanding of the perturbative spectral action on almost-commutative spectral triples will require the systematic study of real structures on such spectral triples.

Finally, the global analytic nature of our definition implies that $KK$-theoretic aspects of almost-commutative spectral triples can be studied in particular detail. Such work has already been undertaken by Zhang for a specific class of  almost-commutative spectral triples he calls \term{projective spectral triples}~\cite{Zha10}.

\appendix
\section{Stability results}

In this appendix, we provide proofs of the stability results needed to prove Corollary~\ref{weakrecon}, and hence Theorem~\ref{acst}. Our account generally follows that of~\cite{ChM09}.

In what follows, let $(\cA,\cH,D)$ be a spectral triple, let $M \in B(\cH)$ be self-adjoint, and let $D_{M} = D+M$. Before continuing, let us prove the following basic stability result:

\begin{lemma}[\cite{ChM09}*{Lemma 2.1, Proposition 2.2}]\label{lem1}
One has that $(\cA,\cH,D_{M})$ is a spectral triple.
\end{lemma}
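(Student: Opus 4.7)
The plan is to verify directly that $(\cA,\cH,D_{M})$ satisfies the three defining properties of a spectral triple: self-adjointness of $D_M$, boundedness of all commutators $[D_M,a]$, and compactness of $a(i+D_M)^{-1}$ for every $a \in \cA$.

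First, since $M$ is bounded and self-adjoint, the Kato--Rellich theorem (in its trivial form for bounded perturbations) implies that $D_{M} = D + M$ is self-adjoint on $\dom D$. Next, for any $a \in \cA$, one has
\[
    [D_{M},a] = [D,a] + [M,a],
\]
where $[D,a]$ is bounded by hypothesis and $[M,a]$ is bounded since both $M$ and $a$ are bounded operators on $\cH$. Thus $[D_M,a]$ extends to a bounded operator.

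The main step is compactness of $a(i+D_{M})^{-1}$. Here I would invoke the second resolvent identity,
\[
    (i + D_{M})^{-1} = (i + D)^{-1} - (i+D)^{-1} M (i + D_{M})^{-1},
\]
and multiply on the left by $a$ to obtain
\[
    a(i+D_{M})^{-1} = a(i+D)^{-1} - \bigl(a(i+D)^{-1}\bigr)\, M\, (i + D_{M})^{-1}.
\]
The operator $a(i+D)^{-1}$ is compact by the spectral triple hypothesis on $(\cA,\cH,D)$, and $M(i+D_M)^{-1}$ is bounded, so the second term is compact (compact times bounded), and hence $a(i+D_M)^{-1}$ is compact.

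The only subtle point is the compact resolvent step, but it is standard once one writes down the resolvent identity; the rest is immediate from the fact that $M$ is a bounded self-adjoint perturbation.
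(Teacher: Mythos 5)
Your proof is correct and follows essentially the same route as the paper's: Kato--Rellich for self-adjointness, the second resolvent identity for compactness, and boundedness of $[M,a]$ for the commutator condition. The only cosmetic difference is that the paper verifies compactness of the resolvent of $D_M$ itself (the unital convention), whereas you verify compactness of $a(i+D_M)^{-1}$; the resolvent-identity argument is the same either way.
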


\begin{proof}
First, by the Kato-Rellich theorem~\cite{RS2}*{Theorem X.12}, $D_{M}$ is self-adjoint on $\dom D_{M}=\dom D$ and essentially self-adjoint on any core of $D$. Next, since $D$ has compact resolvent, for any $\lambda \in \bC \setminus \bR$,
\[
	(D_{M}-\lambda)^{-1} = (D-\lambda)^{-1} - (D_{M}-\lambda)^{-1}M(D-\lambda)^{-1} \in \cK(\cH),
\]
so that $D_{M}$ too has compact resolvent. Finally, for any $a \in \cA$, since $[D,a] \in B(\cH)$ and since $M \in B(\cH)$, 
\[
	[D_{M},a] = [D,a] + [M,a] \in B(\cH).
\]
Thus, $(\cA,\cH,D_{M})$ is indeed a spectral triple.
\end{proof}

\subsection{Metric dimension}

Let us now consider stability of metric dimension.

\begin{lemma}\label{lem2}
If $(\cA,\cH,D)$ has metric dimension $p$, then so does $(\cA,\cH,D_{M})$.
\end{lemma}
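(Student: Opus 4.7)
My plan is to recast the metric dimension $p$ condition as membership of the resolvent in a Schatten--Lorentz ideal $\cL^{p,\infty}(\cH)$, and then exploit the second resolvent identity together with the fact that $\cL^{p,\infty}$ is a two-sided ideal of $B(\cH)$. The key preliminary observation is that $(\cA,\cH,D)$ has metric dimension $p$ precisely when $(1+D^{2})^{-1/2}\in\cL^{p,\infty}$, equivalently when $\inv{(D-i)}\in\cL^{p,\infty}$; the equivalence is immediate from polar decomposition, since $\abs{\inv{(D-i)}}=(1+D^{2})^{-1/2}$, so the two operators share their singular values.

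First I would write down the second resolvent identity
\[
\inv{(D_{M}-i)} \;=\; \inv{(D-i)} \;-\; \inv{(D_{M}-i)}\,M\,\inv{(D-i)},
\]
which is valid because $D_{M}-i$ is invertible by Lemma~\ref{lem1}. Since $\cL^{p,\infty}$ is a two-sided ideal of $B(\cH)$, and since $\inv{(D_{M}-i)}$ and $M$ are both bounded, the correction term on the right-hand side lies in $\cL^{p,\infty}$ as soon as $\inv{(D-i)}$ does. Combined with the hypothesis that $(\cA,\cH,D)$ has metric dimension $p$, this yields $\inv{(D_{M}-i)}\in\cL^{p,\infty}$, and hence $(\cA,\cH,D_{M})$ has metric dimension at most $p$.

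For the reverse inequality I would interchange the roles of $D$ and $D_{M}$: since $D = D_{M} + (-M)$ is itself a bounded self-adjoint perturbation of $D_{M}$, re-running the argument with $-M$ in place of $M$ gives $\inv{(D-i)}\in\cL^{p,\infty}$ whenever $\inv{(D_{M}-i)}$ is, so the two metric dimensions coincide.

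I do not anticipate any genuine obstacle here: the proof is a formal manipulation hinging entirely on the ideal property of $\cL^{p,\infty}$ under bounded perturbation of the resolvent. The only mildly technical point is the preliminary reformulation of metric dimension in terms of ideals of compact operators, which is standard and is where the actual ``metric dimension $p$'' content of the hypothesis is used.
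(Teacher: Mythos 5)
Your argument is correct, but it is not the route the paper takes. The paper invokes the Carey--Phillips operator inequality
\[
\frac{1}{f(\norm{M})}(D^{2}+1)^{-1} \leq (D_{M}^{2}+1)^{-1} \leq f(\norm{M})(D^{2}+1)^{-1},
\qquad f(x) = 1 + \tfrac{1}{2}x^{2} + \tfrac{1}{2}x\sqrt{x^{2}+4},
\]
and then the min--max principle to obtain a \emph{termwise} two-sided comparison of eigenvalues, $\frac{1}{f(\norm{M})}\lambda_{k}((D^{2}+1)^{-1}) \leq \lambda_{k}((D_{M}^{2}+1)^{-1}) \leq f(\norm{M})\lambda_{k}((D^{2}+1)^{-1})$, from which the $O(k^{-2/p})$ asymptotics transfer immediately. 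Your proof replaces this with the second resolvent identity plus the two-sided ideal property of $\cL^{p,\infty}$, together with the observation that $\abs{\inv{(D-i)}} = (D^{2}+1)^{-1/2}$; all of these steps are valid (the resolvent identity holds on all of $\cH$ because $\dom D_{M} = \dom D$ by Kato--Rellich), and the symmetric argument with $-M$ correctly handles the converse inclusion. What you gain is elementarity: no appeal to the Carey--Phillips lemma or to min--max. What you lose is quantitative information: ideal membership in both directions is weaker than the paper's uniform termwise eigenvalue comparison, which also preserves finer data such as lower bounds $\lambda_{k} \gtrsim k^{-2/p}$ and comparability of Dixmier traces. Since the paper's working definition of metric dimension $p$ is exactly the condition $\lambda_{k}((D^{2}+1)^{-1}) = O(k^{-2/p})$, i.e.\ $(D^{2}+1)^{-1/2} \in \cL^{p,\infty}$, your argument fully suffices for the lemma as stated; just be aware that if a matching lower bound on the singular values were built into the definition, your ideal-membership argument alone would not deliver it, whereas the paper's sandwich inequality would.
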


\begin{proof}
By~\cite{CP98}*{Lemma B.6}, one has that
\[
	\frac{1}{f(\norm{M})}(D^{2}+1)^{-1} \leq (D_{M}^{2}+1)^{-1} \leq f(\norm{M})(D^{2}+1)^{-1},
\]
where $f(x) := 1 + \tfrac{1}{2}x^{2} + \tfrac{1}{2}x\sqrt{x^{2}+4}$, so that by~\cite{RS4}*{Lemma on p.\ 270}, if $\lambda_{n}(C)$ denotes the $n$-th eigenvalue of the positive compact operator $C \in B(\cH)$, in decreasing order, then
\[
	\frac{1}{f(\norm{M})}\lambda_{k}((D^{2}+1)^{-1}) \leq \lambda_{k}((D_{M}^{2}+1)^{-1}) \leq f(\norm{M})\lambda_{k}((D^{2}+1)^{-1})
\]
for all $n \in \bN$. Since $\lambda_{k}((D^{2}+1)^{-1}) = O(k^{-2/p})$, it therefore follows that 
\[
	\lambda_{k}((D_{M}^{2}+1)^{-1}) = O(k^{-2/p}).
\]
Thus, $(A,\cH,D_{M})$ has metric dimension $p$, as was claimed.
\end{proof}

\subsection{Finiteness}

The following lemma will suffice to establish stability of finiteness in the proof of Corollary~\ref{weakrecon}, and will also be necessary for our discussion below of [strong] regularity and absolute continuity; we shall follow the proof by Iochum--Levy--Vassilevich.

\begin{lemma}[\cite{ILV}*{Lemma 2.3}]\label{lem3}
For $k \in \bN$, let $H^{k} := \dom D^{k}$ with the Sobolev inner product 
\[
	\ip{\xi,\eta}_{k} := \ip{D^{k}\xi,D^{k}\eta} + \ip{\xi,\eta},
\]
and similarly let $H^{k}_{M} := \dom D^{k}_{M}$ with the Sobolev inner product
\[
	\ip{\xi,\eta}_{M,k} := \ip{D_{M}^{k}\xi,D_{M}^{k}\eta} + \ip{\xi,\eta}.
\]
Suppose now that $M$ restricts to an element of $B(H^{k})$ for each $k \in \bN$. Then $H^{k} = H^{k}_{M}$ for all $k \in \bN$ with equivalent norms, and thus, in particular, $\cap_{k} \dom D^{k} = \cap_{k} \dom D^{k}_{M}$.
\end{lemma}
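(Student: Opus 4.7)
My plan is to proceed by induction on $k$. The base case $k=0$ is trivial, and $k=1$ follows from the Kato--Rellich theorem as in the proof of Lemma~\ref{lem1}, which gives $\dom D_M = \dom D$ with equivalent graph norms. For the inductive step, I will first establish $H^k = H^k_M$ as sets and then upgrade to equivalence of norms via the closed graph theorem.

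To show $H^k \subseteq H^k_M$ for $k \geq 2$, I will prove the auxiliary claim that for $\xi \in H^k$ one has $D_M^j \xi \in H^{k-j}$ for each $j = 0, 1, \ldots, k$, by a secondary induction on $j$. Supposing $D_M^j \xi \in H^{k-j}$ with $j < k$, one has in particular $D_M^j \xi \in H^1 = \dom D_M$ since $k-j \geq 1$, so we may write $D_M^{j+1} \xi = D(D_M^j \xi) + M(D_M^j \xi)$, where the first summand lies in $H^{k-j-1}$ because $D$ drops the Sobolev index by one, and the second lies in $H^{k-j} \subseteq H^{k-j-1}$ by the hypothesis $M \in B(H^{k-j})$. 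Taking $j = k$ yields $D_M^k \xi \in \cH$, so $\xi \in H^k_M$. The reverse inclusion $H^k_M \subseteq H^k$ follows by the symmetric argument applied to $D = D_M + (-M)$, where the primary inductive hypothesis $H^j = H^j_M$ for $j < k$ promotes the assumption $M \in B(H^j)$ to $M \in B(H^j_M)$.

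Finally, since $D^k$ and $D_M^k$ are closed operators (being powers of self-adjoint operators), both $H^k$ and $H^k_M$ are Hilbert spaces in their respective graph norms, and since both norms dominate the $\cH$-norm, the identity between them has closed graph. The closed graph theorem then yields equivalence of norms, and intersecting over $k$ produces the final claim $\cap_k \dom D^k = \cap_k \dom D_M^k$.

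I expect the main subtlety to lie in the bookkeeping of the secondary induction: at each stage one must verify not just that $D_M^{j+1} \xi$ lies in $\cH$, but that it retains enough Sobolev regularity (namely, that it lies in $H^{k-j-1}$) for the recursion to continue. This is precisely why one must strengthen the inductive hypothesis from the naive $D_M^j \xi \in \cH$ to $D_M^j \xi \in H^{k-j}$, and why the hypothesis $M \in B(H^j)$ for \emph{all} $j$ (rather than just boundedness on $\cH$) is essential.
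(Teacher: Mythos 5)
Your argument is correct, and its first half (equality of the spaces $H^k$ and $H^k_M$) is essentially the paper's: the paper runs a single induction on $k$ using the identity $\dom D_M^{m+1}=\{\xi\in\dom D^m \mid (D+M)\xi\in\dom D^m\}$ together with the invariance of $\dom D^m$ under $M$, which is the same mechanism as your double induction on $j$, just packaged more compactly. One word of caution on your reverse inclusion: at the top step $j=0$ of the secondary induction you cannot invoke $M\in B(H^k_M)$ (that would be circular, since $H^k_M=H^k$ is what you are proving); instead you should observe that $D^j\xi\in H^{k-j}_M\subseteq H^{k-j-1}_M=H^{k-j-1}$ by the primary inductive hypothesis and then use $M\in B(H^{k-j-1})$ --- this is exactly the bookkeeping you flag, and it does close up. Where you genuinely diverge is the norm equivalence: the paper first replaces the Sobolev inner products by the equivalent ones built from $(D+i)^k$ and $(D_M+i)^k$, then exhibits the explicit bounded bijection $B=(D_M-i)^k(D-i)^{-k}$ and applies the bounded inverse theorem, whereas you apply the closed graph theorem directly to the identity map between the two graph-norm Hilbert spaces, using only that $D^k$ and $D_M^k$ are closed and that both norms dominate the $\cH$-norm. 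Your route is shorter and avoids the paper's detour through the $(D\pm i)^k$ inner products entirely; the paper's route, in exchange, produces the concrete intertwiner $B$ and hence a quantitative handle on the constants, which is in the spirit of the Iochum--Levy--Vassilevich source it follows. Both are applications of the same Banach isomorphism principle, so either is acceptable.
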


\begin{proof}
Let us first prove equality of vector spaces. We proceed by induction on $k$.  First, by the Kato-Rellich theorem~\cite{RS2}*{Theorem X.12}, $D_{M}$ is self-adjoint on $\dom D_{M}=\dom D$ and essentially self-adjoint on any core of $D$, so that the claim holds for $k=1$ . Now, assume by induction that the claim holds for some $m \in \bN$. Then, by the induction hypothesis and our restriction on $M$,
\begin{align*}
	\dom D_{M}^{m+1} &= \set{\xi \in \dom D_{M}^{m} \mid D_{M}\xi \in \dom D_{M}^{m}}\\
	&= \set{\xi \in \dom D^{m} \mid (D+M)\xi \in \dom D^{m}}\\
	&= \set{\xi \in \dom D^{m} \mid D\xi \in \dom D^{m}}\\
	&= \dom D^{m+1},
\end{align*}
as required.

Let us now prove equivalence of the Sobolev norms. Before continuing, we will find it convenient to replace $\ip{\cdot,\cdot}_{k}$ and $\ip{\cdot,\cdot}_{M,k}$ with $\hp{\cdot,\cdot}_{k}$ and $\hp{\cdot,\cdot}_{M,k}$, respectively, where
\begin{align*}
	\hp{\xi,\eta}_{k} &:= \ip{(D+i)^{k}\xi,(D+i)^{k}\eta} + \ip{\xi,\eta},\\
	\hp{\xi,\eta}_{M,k} &:= \ip{(D_{M}+i)^{k}\xi,(D_{M}+i)^{k}\eta} + \ip{\xi,\eta}.
\end{align*}
Indeed, let us show, for instance, that $\ip{\cdot,\cdot}_{k}$ and $\hp{\cdot,\cdot}_{k}$ define equivalent norms. On the one hand, for $\xi \in H^{k}$,
\[
	\norm{(D+i)^{k}\xi} = \norm{\sum_{m=0}^{k} i^{m} D^{k-m}\xi} \leq \sum_{m=0}^{k} \norm{D^{k-m}\xi},
\]
so that by continuity of the inclusions $H^{k} \hookrightarrow H^{k-m}$ for the $\ip{\cdot,\cdot}_{n}$, there exists some $C>0$, independent of $\xi$, such that
\[
	\norm{(D+i)^{k}\xi}^{2} + \norm{\xi}^{2} \leq C\left(\norm{D^{k}\xi}^{2} + \norm{\xi}^{2}\right).
\]
On the other hand since the $\hp{\cdot,\cdot}_{k}$ is also simply the $k$-th Sobolev inner product for $\sqrt{D^{2}+1}$, that the inclusions $H^{k} \hookrightarrow H^{k-m}$ are also continuous for the $\hp{\cdot,\cdot}_{n}$, and hence, since
\[
	\norm{D^{k}\xi} = \norm{((D+i)-i)^{k}\xi} =  \norm{\sum_{m=0}^{k} (-i)^{m} (D+i)^{k-m}\xi} \leq \sum_{m=0}^{k} \norm{(D+i)^{k-m}\xi},
\]
there exists some $C^{\prime}$, independent of $\xi$, such that
\[
	\norm{D^{k}\xi}^{2} + \norm{\xi}^{2} \leq C^{\prime}\left(\norm{(D+i)^{k}\xi}^{2} + \norm{\xi}^{2}\right).
\]
Thus, $\ip{\cdot,\cdot}_{k}$ and $\hp{\cdot,\cdot}_{k}$ do indeed define equivalent norms.

Now, fix $k \in \bN$, and consider the linear map $B = (D_{M}-i)^{k}(D-i)^{-k}$ on $\cH$; we claim that $B$ is, in fact, bounded on $\cH$. First, one has that on $\dom D^{k} = \dom D_{M}^{k}$,
\[
	(D_{M}-i)^{k} = ((D-i)+M)^{k} = (D-i)^{k} + \sum_{m=1}^{k} T_{m},
\]
where for each $m$, $T_{m}$ is a product of $k$ operators, each of which is either $(D-i)$ or $M$. By our assumption on $M$, then, each $T_{m}$ therefore defines a continuous map $H_{k} \to H_{1}$, so that since $(D-i)^{k} : \cH_{k} \to \cH$ and $(D-i)^{-k} : \cH \to \cH_{k}$ are continuous,
\[
	B = (D_{M}-i)^{k}(D-i)^{-k} = \Id_{\cH} + \sum_{m=1}^{k} T_{m}(D-i)^{k}
\]
defines a bounded operator on $\cH$. Since $B$ is bijective, it therefore follows by the bounded inverse theorem that $B$ has a bounded inverse. Thus, for $\xi \in \dom D^{k} = \dom D_{M}^{k}$, since $(D_{M}-i)^{k} = B(D-i)^{k}$ and $(D-i)^{k} = \inv{B}(D_{M}-i)^{k}$,
\[
	\hp{\xi,\xi}_{M,k} \leq \max\set{1,\norm{B}^{2}}\hp{\xi,\xi}_{k}, \quad \hp{\xi,\xi}_{k} \leq \max\set{1,\norm{\inv{B}}^{2}}\hp{\xi,\xi}_{M,k},
\]
which implies, by our earlier observation, that $\norm{\cdot}_{k}$ and $\norm{\cdot}_{M,k}$ are equivalent, as required.
\end{proof}

\subsection{[Strong] Regularity}

We shall now use Higson's characterisation of [strong] regularity, first to prove a corollary thereof that will allow us actually to apply Lemma~\ref{lem3} in context, and then to prove a result on stability of [strong] regularity, due to Chakraborty and Mathai~\cite{ChM09}.

In order to state and use Higson's characterisation, we shall need the following definition due to Higson~\cite{Hig06}:

\begin{definition}\label{algdif}
Suppose that each $a \in A$ maps $\cHi = \cap_{k} \dom D^{k}$ to itself. The \emph{[extended] algebra of differential operators} associated to $(\cA,\cH,D)$ is the smallest algebra $\cD$ of linear operators on $\cHi$ which contains $\cA$ and $[D,\cA]$ [and $\End_{\cA}(\cHi)$], and which is closed under the operator $T \mapsto [D^{2},T]$. The algebra $\cD$ is defined and filtered inductively as follows:
\begin{itemize}
	\item[(a)] $\cD_{0}$ is the algebra generated by $\cA$ and $[D,\cA]$ [and $\End_{\cA}(\cHi)$].
	\item[(b)] $\cD_{1} = [D^{2},\cD_{0}] + \cD_{0}[D^{2},\cD_{0}]$.
	\item[(c)] $\cD_{k} = \sum_{j=1}^{k-1}\cD_{j}\cD_{k-j} + [D^{2},\cD_{k-1}] + \cD_{0}[D^{2},\cD_{k-1}]$.
\end{itemize}
Moreover, we shall call $(\cD,D)$ a \term{differential pair} if for every $X \in \cD$ of order $\leq k$, there exists some $\epsilon > 0$ such that for all $\xi \in \cHi$, $\norm{D^{k}\xi} + \norm{\xi} \geq \epsilon \norm{X\xi}$.
\end{definition} 

In the case of a spectral triple of the form $(C^{\infty}(X),L^{2}(X,H),D)$, where $X$ is a compact orientable Riemannian manifold, $H$ is a self-adjoint Clifford module bundle over $X$, and $D$ is a symmetric Dirac-type operator on $H$, then $\cD$, in either case, is indeed an algebra of differential operators on $H$; that $(\cD,D)$ is a differential pair follows from the G\r arding estimates of elliptic regularity theory~\cite{Roe}*{Chapter 5}.

Now we can give Higson's characterisation of [strong] regularity:

\begin{theorem}[Higson~\cite{Hig06}*{Theorem 4.26}]\label{Higson}
Suppose that each $a \in A$ maps $\cHi = \cap_{k} \dom D^{k}$ to itself. Let $\cD$ be the [extended] algebra of differential operators associated to $(\cA,\cH,D)$. Then $(\cA,\cH,D)$ is [strongly] regular if and only if $(\cD,D)$ is a differential pair.
\end{theorem}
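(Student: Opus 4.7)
My plan is to prove both implications via an abstract pseudodifferential calculus, exploiting the heuristic that an operator $T$ satisfies $T\in\cap_k\dom\delta^k$ precisely when $T$ is a ``pseudodifferential operator of order zero'' relative to $D$, while elements of $\cD_k$ should correspond to operators of order at most $k$ in this same sense. Under this correspondence, the Gårding estimate in the definition of a differential pair is exactly the order-$k$ mapping property $\cD_k\colon\cHi\to\cH$ extended as a bounded operator to the appropriate Sobolev scale.

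For the forward direction, assume [strong] regularity and proceed by induction on the filtration $\{\cD_k\}$ to establish the Gårding estimate $\norm{X\xi}\le C(\norm{D^k\xi}+\norm{\xi})$ for $X\in\cD_k$ and $\xi\in\cHi$. The base case $k=0$ is immediate because the generators of $\cD_0$---namely $\cA$, $[D,\cA]$, and (in the strongly regular case) $\End_{\cA}(\cHi)$---are all bounded operators on $\cH$. For the inductive step, one handles the three operations in the definition of $\cD_k$: products $\cD_j\cdot\cD_{k-j}$, commutators with $D^2$, and left-multiplication by $\cD_0$. Products and left-multiplication compose the inductive bounds by rewriting $D^k(AB)$ via a graded Leibniz expansion, while the commutator step is where [strong] regularity enters crucially: writing $[D^2,T]=[D,T]D+D[D,T]$ and using that $[D,T]$ lies in $\cap_k\dom\delta^k$, one obtains that $[D^2,T]$ has order one exactly as required.

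For the reverse direction, assume $(\cD,D)$ is a differential pair and aim to show each generator of $\cD_0$ lies in $\cap_n\dom\delta^n$. The essential tool is an integral representation relating $\delta=[\abs{D},\cdot]$ to $[D^2,\cdot]$, schematically
\[
    [\abs{D},T]\;=\;\frac{1}{\pi}\int_0^{\infty}[D^2,T]\,(D^2+\mu)^{-1}\,\mu^{-1/2}\,d\mu
\]
(valid on $\cHi$, after suitable spectral truncation near $\ker D$). Since $[D^2,T]\in\cD_1$, the differential pair estimate bounds $\norm{[D^2,T](D^2+\mu)^{-1}\xi}$ uniformly in $\mu$ by a constant multiple of $(1+\mu)^{-1/2}\norm{\xi}$, yielding absolute convergence of the integral and boundedness of $\delta(T)$. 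Iterating, $\delta^n(T)$ is expressed as a nested resolvent integral whose integrand, at each stage, lies in $\cD_n$ of the correct order and is therefore bounded by the differential pair estimate.

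The main obstacle lies in this reverse direction: one must verify that iterated applications of $\delta$ remain inside the calculus, meaning that each successive integrand is a well-controlled element of the appropriate $\cD_k$, and that the nested resolvent integrals converge uniformly with bounds compounding in a controlled fashion. This delicate bookkeeping, together with the verification that differentiation under the integral is legitimate on $\cHi$, is precisely the content of Higson's abstract pseudodifferential calculus in~\cite{Hig06}*{\S~4}, on which the entire argument rests.
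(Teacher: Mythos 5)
You should first be aware that the paper does not prove this statement at all: it is quoted from Higson (\cite{Hig06}*{Theorem 4.26}), and the paper's only original contribution here is the remark that Higson's proof applies unchanged when $\End_{\cA}(\cHi)$ is adjoined to the generators of $\cD_{0}$, giving the strong-regularity variant. Your sketch reproduces the architecture of Higson's actual argument, so that is the right benchmark, and measured against it two of your steps would fail as written. In the forward direction you propose to control $[D^{2},T]$ via $[D^{2},T]=[D,T]D+D[D,T]$ together with the claim that $[D,T]$ lies in $\cap_{k}\dom\delta^{k}$. But regularity gives membership in $\cap_{k}\dom\delta^{k}$ only for elements of $\cA+[D,\cA]$, and it controls the derivation $\delta=[\abs{D},\cdot\,]$, not $[D,\cdot\,]$; already for $T=[D,b]\in\cD_{0}$ the operator $[D,[D,b]]$ is generically an honest first-order (unbounded) operator, even for the canonical triple of a manifold. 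The identity one actually uses is $[D^{2},T]=\delta(T)\abs{D}+\abs{D}\delta(T)$, which exhibits $[D^{2},T]$ as having order one precisely because $\delta(T)$ and $\delta^{2}(T)$ are bounded.

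The second problem is in the reverse direction. Setting aside that your displayed formula is missing a resolvent (the correct expression is $[\abs{D},T]=\tfrac{1}{\pi}\int_{0}^{\infty}\mu^{1/2}(D^{2}+\mu)^{-1}[D^{2},T](D^{2}+\mu)^{-1}\,d\mu$, suitably regularised near $\ker D$), the convergence claim is false: since $[D^{2},T]$ has order one, the basic estimate only gives $\norm{(D^{2}+\mu)^{-1}[D^{2},T](D^{2}+\mu)^{-1}}=O(\mu^{-3/2})$, so the integrand is $O(\mu^{-1})$ and the integral is \emph{not} absolutely convergent at infinity; your own bound of $(1+\mu)^{-1/2}$ against the weight $\mu^{-1/2}$ runs into the same logarithmic divergence. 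The standard repair is to commute one resolvent across $[D^{2},T]$, writing $(D^{2}+\mu)^{-1}[D^{2},T]=[D^{2},T](D^{2}+\mu)^{-1}+(D^{2}+\mu)^{-1}[D^{2},[D^{2},T]](D^{2}+\mu)^{-1}$: the first term integrates exactly to a multiple of $[D^{2},T]\abs{D}^{-1}$, which is bounded by the basic estimate, and the second pairs the order-two element $[D^{2},[D^{2},T]]\in\cD_{2}$ against three resolvents, which does converge. Since you ultimately defer all of this bookkeeping to Higson's \S~4, your proposal is in substance a citation with an imperfect gloss---which is what the paper itself does---but as a self-contained argument the two points above are genuine gaps, and you should also record explicitly why enlarging $\cD_{0}$ by $\End_{\cA}(\cHi)$ costs nothing in Higson's proof, as that is the one point the paper actually asserts.
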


The above theorem was originally stated by Higson as a characterisation of regularity, but his proof immediately applies to yield the analogous characterisation of strong regularity.

Let us now turn to our corollary of Theorem~\ref{Higson}, which will allow us to use Lemma~\ref{lem3}:

\begin{corollary}\label{lem4a}
Suppose that $(\cA,\cH,D)$ is [strongly] regular with [extended] algebra of differential operators $\cD$, and that $M \in \cD_{0}$. Let $\cH^{k} := \dom D^{k}$ with the usual Sobolev inner product. Then for each $k \in \bN$, $M$ restricts to a bounded operator on $\cH^{k}$.
\end{corollary}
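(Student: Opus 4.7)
Plan. The key is to translate (strong) regularity into the condition that $\cD_0 \subset \bigcap_n \dom \delta^n$, where $\delta := [|D|,\cdot]$, and then use a Leibniz-type expansion of $|D|^k M$ to separate the smooth ``coefficients'' $\delta^j(M)$ from the unbounded Sobolev weights $|D|^{k-j}$. This fits naturally with the scale $\cH^k = \dom|D|^k$, whose defining norm coincides with $\norm{\cdot}_k$ because $\norm{D^k\xi} = \norm{|D|^k\xi}$ by functional calculus.

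The inclusion $\cD_0 \subset \bigcap_n \dom \delta^n$ is straightforward: by (strong) regularity, the generators of $\cD_0$---namely $\cA$, $[D,\cA]$, and, in the strongly regular case, $\End_{\cA}(\cHi)$---lie in $\bigcap_n \dom \delta^n$, and since $\delta$ is a derivation on $B(\cH)$ the intersection is a $\ast$-subalgebra of $B(\cH)$ containing the whole of $\cD_0$. In particular, every $\delta^j(M)$ is a bounded operator on $\cH$.

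I would then establish, by induction on $k$ starting from $|D|\,M = M\,|D| + \delta(M)$ on $\cH^1$, the Leibniz identity
\begin{equation*}
    |D|^k M\,\xi \;=\; \sum_{j=0}^k \binom{k}{j}\, \delta^j(M)\, |D|^{k-j}\,\xi, \qquad \xi \in \cHi,
\end{equation*}
valid on the dense core $\cHi$, provided that $\cHi$ is preserved by $M$, $|D|$, and each $\delta^j(M)$. Bounding the right-hand side using $\norm{\delta^j(M)} < \infty$ together with the elementary spectral inequality $\norm{|D|^{k-j}\xi} \leq \norm{\xi}_k$ for $0 \leq j \leq k$ then yields $\norm{M\xi}_k \leq C_k \norm{\xi}_k$ on $\cHi$, from which $M$ extends to a bounded operator on $\cH^k$ by density.

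The main technical obstacle is precisely the preservation of $\cHi$ by each bounded operator $\delta^j(M)$, since a priori these operators act only on $\cH$. This is resolved by a subsidiary induction exploiting $|D|\,\delta^j(M) = \delta^j(M)\,|D| + \delta^{j+1}(M)$ on $\cH^1$: starting from $\delta^j(M)\,\cHi \subset \cH$ and iterating this identity shows $\delta^j(M)\,\cHi \subset \dom|D|^n$ for every $n$, hence $\delta^j(M)\,\cHi \subset \cHi$, making the Leibniz computation well-defined on $\cHi$ at every stage.
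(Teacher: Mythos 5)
Your argument is correct, but it takes a genuinely different route from the paper's. The paper deduces the corollary in essentially one line: by Theorem~\ref{Higson}, [strong] regularity makes $(\cD,D)$ a differential pair, and Higson's Lemma~\ref{lem4aa} then gives that any $X\in\cD_{k}$ is bounded $\cH^{k+m}\to\cH^{m}$; the case $k=0$ is the statement. You instead work directly from the paper's definition of [strong] regularity --- note that your ``translation'' $\cD_{0}\subset\bigcap_{n}\dom\delta^{n}$ is immediate, since the generators of $\cD_{0}$ lie in $\bigcap_{n}\dom\delta^{n}$ \emph{by definition} of [strong] regularity and the latter is a $\ast$-algebra --- and you then re-prove the $\cD_{0}$ case of Higson's lemma by hand via the Leibniz expansion of $\abs{D}^{k}M$. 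The domain bookkeeping you flag (that bounded operators in $\bigcap_{n}\dom\delta^{n}$ preserve each $\dom\abs{D}^{n}$, hence $\cHi$) is exactly the right point to worry about, and your subsidiary induction using $\abs{D}\,\delta^{j}(M)=\delta^{j}(M)\,\abs{D}+\delta^{j+1}(M)$ settles it. Your approach buys self-containedness: it avoids importing Higson's Theorem 4.26 and Lemma 4.7 for this step (though the paper needs that machinery anyway for Lemma~\ref{lem4}, so nothing is saved globally). The paper's approach buys brevity and a stronger conclusion, namely boundedness $\cH^{k+m}\to\cH^{m}$ for all of $\cD_{k}$, not just for $\cD_{0}$ acting on the diagonal. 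One cosmetic point: the Sobolev inner product in the paper is built from $D^{k}$ rather than $\abs{D}^{k}$, but, as you observe, $\norm{D^{k}\xi}=\norm{\abs{D}^{k}\xi}$ and $\dom D^{k}=\dom\abs{D}^{k}$ by the spectral theorem, so your estimate $\norm{\abs{D}^{k-j}\xi}\leq\norm{\xi}_{k}$ is exactly what is needed.
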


Since $(\cD,D)$ is a differential pair by Theorem~\ref{Higson}, this corollary is an immediate consequence of the following lemma of Higson's, proved by him in a more general context:

\begin{lemma}[\cite{Hig06}*{Lemma 4.7}]\label{lem4aa}
Suppose that each $a \in A$ maps $\cHi = \cap_{k} \dom D^{k}$ to itself. Let $\cD$ be the [extended] algebra of differential operators associated to the spectral triple $(\cA,\cH,D)$. For each $k \in \bN \cup \set{0}$, let $\cH^{k} := \dom D^{k}$ with the Sobolev inner product. If $(\cD,D)$ is a differential pair, then for any $X \in \cD_{k}$, $X$ extends to a bounded operator $\cH^{k+m} \to \cH^{m}$ for all $m \in \bN \cup \set{0}$.
\end{lemma}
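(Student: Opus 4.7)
The plan is to prove the lemma in three stages: base case $m=0$ from the differential pair hypothesis, even indices $m = 2l$ by a combinatorial argument using $\operatorname{ad}_{D^2}$, and odd indices $m = 2l+1$ by complex interpolation between adjacent even cases. For the base case, the defining inequality of a differential pair immediately gives, for any $X \in \cD_k$ and $\xi \in \cHi$,
\[
	\norm{X\xi} \le \epsilon^{-1}\hp{\norm{D^k\xi} + \norm{\xi}} \le \sqrt{2}\,\epsilon^{-1}\norm{\xi}_{\cH^k},
\]
so $X$ extends to a bounded operator $\cH^k \to \cH^0$ for every $k$.

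For the even cases, the crucial algebraic input is the observation, immediate from the defining recursion for $\cD$, that $[D^2, \cD_k] \subset \cD_{k+1}$, which iterates to $(\operatorname{ad}_{D^2})^j(X) \in \cD_{k+j}$ for every $X \in \cD_k$ and every $j \ge 0$. Combining this with the Leibniz expansion
\[
	(D^2)^l X = \sum_{j=0}^l \binom{l}{j} (\operatorname{ad}_{D^2})^j(X)\, (D^2)^{l-j},
\]
valid because $\operatorname{ad}_{D^2}$ is a derivation, I will bound $\norm{(D^2)^l X\xi}$ for $\xi \in \cH^{k+2l}$ by applying the base case at filtration level $k+j$ to each summand; routine bookkeeping on the Sobolev orders of $(D^2)^{l-j}\xi$, namely $\norm{(D^2)^{l-j}\xi}_{\cH^{k+j}}^{2} = \norm{D^{k+2l-j}\xi}^{2} + \norm{D^{2(l-j)}\xi}^{2} \le 2\norm{\xi}_{\cH^{k+2l}}^{2}$, controls every summand by $\norm{\xi}_{\cH^{k+2l}}$. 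Combined with the base-case bound on $\norm{X\xi}$, this yields $X : \cH^{k+2l} \to \cH^{2l}$ bounded for all $k$ and $l$.

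For the odd cases $m = 2l+1$, I will interpolate between the two adjacent even cases $m-1$ and $m+1$, both of which will already have been established. In spectral-theoretic terms, the Sobolev spaces $\cH^s := \dom(1+D^2)^{s/2}$ form an interpolation scale, and applying the three-lines theorem to the analytic operator family $z \mapsto (1+D^2)^{z/2} X (1+D^2)^{-(z+k)/2}$ on a vertical strip of width $1$ gives boundedness at the intermediate integer index.

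The main obstacle is precisely the odd case: because $[D, X]$ need not lie in $\cD$, there is no direct combinatorial recursion analogous to the one that handles the even case, and the most natural bilinear estimate starting from $\norm{DX\xi}^{2} = \ip{X\xi, D^{2} X\xi}$ only produces the bound $\norm{DX\xi}^{2} \le C\norm{\xi}_{\cH^k}\norm{\xi}_{\cH^{k+2}}$, which by Cauchy--Schwarz is strictly weaker than the needed $C\norm{\xi}_{\cH^{k+1}}^{2}$. Complex interpolation is the cleanest way around this obstruction.
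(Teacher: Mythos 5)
The paper offers no proof of this lemma --- it is quoted directly from Higson's notes, and Corollary~\ref{lem4a} is deduced from it --- but your proposal is correct and essentially reconstructs Higson's own argument: the basic estimate gives $m=0$; the expansion of $(D^2)^lX$ into iterated commutators $(\operatorname{ad}_{D^2})^j(X)(D^2)^{l-j}$, combined with the filtration property $[D^2,\cD_n]\subset\cD_{n+1}$ and the base case at order $k+j$, gives even $m$; and complex interpolation on the Sobolev scale of $1+D^2$ supplies the odd case, exactly where a direct commutator recursion is unavailable. The only steps left implicit --- the density of $\cHi$ in each $\cH^{n}$ (so that the a priori estimate on the core yields the claimed bounded extension, using that $D^{m}$ is closed) and the analyticity/admissible-growth hypotheses needed to run the three-lines argument for the family $(1+D^2)^{z/2}X(1+D^2)^{-(z+k)/2}$ --- are routine.
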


Finally, with Higson's characterisation of [strong] regularity and Lemma~\ref{lem4a} in place, we can finally state and prove our stability result for [strong] regularity:

\begin{lemma}[\cite{ChM09}*{Proposition 4.2}]\label{lem4}
Suppose that $(\cA,\cH,D)$ is [strongly] regular with [extended] algebra of differential operators $\cD$, and that $M \in \cD_{0}$. Suppose, moreover, that $[D_{M}^{2}-D^{2},T] \in \cD_{k+1}$ for $T \in \cD_{k}$. Then $(\cA,\cH,D_{M})$ is also [strongly] regular.
\end{lemma}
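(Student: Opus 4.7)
The plan is to apply Higson's characterization (Theorem~\ref{Higson}): to show $(\cA,\cH,D_M)$ is [strongly] regular, it suffices to verify that $(\cD_M,D_M)$ is a differential pair, where $\cD_M$ denotes the [extended] algebra of differential operators for $(\cA,\cH,D_M)$. The strategy is twofold. First, I will show by induction on the filtration index that $(\cD_M)_k \subseteq \cD_k$ for every $k$. Second, I will transfer the differential-pair estimate from $(\cD,D)$ to $(\cD_M,D_M)$, using the Sobolev-norm equivalence furnished by Lemma~\ref{lem3}, which is applicable because $M \in \cD_0$ and Corollary~\ref{lem4a} provides the required boundedness of $M$ on every Sobolev space $\cH^k$.

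For the inductive inclusion, the base case $k=0$ requires that $\cA$, $[D_M,\cA]$, and (in the strongly regular case) $\End_\cA(\cHi_M)$ all lie in $\cD_0$. The first inclusion is immediate; for the second, $[D_M,a] = [D,a] + [M,a] \in \cD_0$ since $M \in \cD_0$ and $\cA \subseteq \cD_0$; for the third, Corollary~\ref{lem4a} together with Lemma~\ref{lem3} yields $\cHi_M = \cHi$, whence $\End_\cA(\cHi_M) = \End_\cA(\cHi) \subseteq \cD_0$. For the inductive step, the defining operations for $\cD_M$ are pairwise multiplication $(\cD_M)_j \cdot (\cD_M)_{k-j}$, the commutator $[D_M^2, \cdot]$ applied to $(\cD_M)_{k-1}$, and left multiplication by $(\cD_M)_0$ on $[D_M^2, (\cD_M)_{k-1}]$. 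Multiplication manifestly respects the filtered inclusion. For the commutator, decompose $[D_M^2, T] = [D^2, T] + [D_M^2 - D^2, T]$ for $T \in (\cD_M)_{k-1} \subseteq \cD_{k-1}$: the first summand lies in $\cD_k$ by definition of $\cD$, and the second by hypothesis. The remaining left-multiplication operation then lands in $\cD_0 \cdot \cD_k \subseteq \cD_k$.

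For the differential-pair estimate, fix $X \in (\cD_M)_k \subseteq \cD_k$. Since $(\cD,D)$ is a differential pair, there exists $\epsilon > 0$ with $\norm{D^k \xi} + \norm{\xi} \geq \epsilon \norm{X\xi}$ for all $\xi \in \cHi$. By Corollary~\ref{lem4a} and Lemma~\ref{lem3}, $\cHi_M = \cHi$ and there exists a constant $C_k > 0$ such that $\norm{D^k \xi} + \norm{\xi} \leq C_k \bigl(\norm{D_M^k \xi} + \norm{\xi}\bigr)$ for every $\xi \in \cHi$. Combining these two inequalities yields $\norm{D_M^k \xi} + \norm{\xi} \geq (\epsilon/C_k) \norm{X\xi}$, which is exactly the differential-pair condition for $(\cD_M, D_M)$. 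The conclusion then follows from Higson's theorem.

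The main obstacle I anticipate is the careful bookkeeping in the inductive step—in particular, ensuring that every generator of $(\cD_M)_{k+1}$ produced by the iterative construction lands in $\cD_{k+1}$, which is precisely where the hypothesis $[D_M^2 - D^2, T] \in \cD_{k+1}$ for $T \in \cD_k$ is indispensable. Once the filtered inclusion is established, the rest amounts to repackaging: the Sobolev-norm equivalence supplied by Lemma~\ref{lem3} turns the $(\cD,D)$-estimate into the $(\cD_M,D_M)$-estimate essentially for free.
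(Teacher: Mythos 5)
Your proposal is correct and follows essentially the same route as the paper: establish the filtered inclusion $\cD_M \subset \cD$ (which the paper asserts directly from the hypothesis on $D_M^2 - D^2$, and which you helpfully spell out by induction), then transfer the differential-pair estimate from $D$ to $D_M$ via the Sobolev-norm equivalence of Lemmas~\ref{lem4a} and~\ref{lem3}, and conclude by Theorem~\ref{Higson}. No gaps.
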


\begin{proof}[Proof of Lemma~\ref{lem4}]
First, by regularity of $(\cA,\cH,D)$ and Lemma~\ref{lem3}, $(\cA,\cH,D_{M})$ is such that each $a \in \cA$ maps $\cap_{k} \dom D_{M}^{k} = \cap_{k} \dom D^{k} = \cHi$ to itself.

Now, let $\cD_{M}$ be the [extended] algebra of differential operators for $(\cA,\cH,D_{M})$. Then, by our hypothesis on $D_{M}^{2}-D^{2}$,  there is a filtered inclusion $\cD_M \subset \cD$ of [extended] algebras of differential operators, so that by [strong] regularity of $(\cA,\cH,D)$ and Theorem~\ref{Higson}, $\cD_{M}$ satisfies the basic estimate for $D$. Fix $X \in \cD_{M}$ of order $\leq k$, so that there exists some $\epsilon > 0$ such that for all $\xi \in \cHi := \cap_{m} \dom D^{m} = \cap_{m} \dom D_{M}^{m}$,
\[
	\norm{D^{k}\xi} + \norm{\xi} \geq \epsilon\norm{X\xi}.
\]
However, since $\dom D^{k} = \dom D_{M}^{k}$ with equivalent Sobolev norms by Lemmas~\ref{lem4a} and~\ref{lem3}, it follows that that $D^{k}$ is bounded as an operator from $\dom D_{M}^{k}$ endowed with the Sobolev $k$-norm for $D_{M}$, to $\cH$, implying that $\norm{D^{k}\xi} \leq \alpha \norm{D_{M}^{k}\xi} + \beta \norm{\xi}$ for some $\alpha$, $\beta > 0$ independent of $\xi$, and hence that
\[
	\norm{D^{k}_{M}\xi} + \norm{\xi} \geq \epsilon^{\prime}\norm{X\xi}
\]
for some $\epsilon^{\prime} > 0$ independent of $\xi$. Thus, $(\cD_{M},D_{M})$ is a differential pair, so that by Theorem~\ref{Higson}, $(\cA,\cH,D_{M})$ is indeed [strongly] regular.
\end{proof}

\begin{remark}
If $M$ is an inner fluctuation of the metric, that is, if 
\[
	M = \sum_{i=1}^{n} a_{i}[D,b_{i}]
\]
for some $a_{i}$, $b_{i} \in \cA$, then the condition that $[D_{M}^{2}-D^{2},T] \in \cD_{k+1}$ for $T \in \cD_{k}$ is automatically satisfied.
\end{remark}

\subsection{Absolute continuity}

Finally, we consider stability of absolute continuity.

\begin{lemma}\label{lem5}
If $(\cA,\cH,D)$ is strongly regular and of metric dimension $p$, and if $M \subset \End_{\cA}(\cHi)$ for $\cHi = \cap_{k} \dom D^{k}$, then for all $T \in B(\cH)$, $\ncint T \abs{D}^{-p} = \ncint T \abs{D_{M}}^{-p}$.
\end{lemma}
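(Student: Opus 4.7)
The plan is to reduce Lemma~\ref{lem5} to showing that $|D|^{-p} - |D_M|^{-p} \in \cL^{1}(\cH)$. Once this is established, for any $T \in B(\cH)$ the product $T(|D|^{-p} - |D_M|^{-p})$ is trace class, so any Dixmier trace annihilates it, whence $\ncint T|D|^{-p} = \ncint T|D_M|^{-p}$.

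First, I would deploy the hypotheses to access the Sobolev-space machinery. Since $(\cA,\cH,D)$ is strongly regular and $M \in \End_{\cA}(\cHi)$, $M$ lies in the extended algebra $\cD_{0}$ of differential operators, so Corollary~\ref{lem4a} yields that $M$ is bounded on each Sobolev space $\cH^{k} := \dom D^{k}$. Hence Lemma~\ref{lem3} applies, giving $\dom D^{k} = \dom D_{M}^{k}$ with equivalent Sobolev norms for every $k$; equivalently, the operators $(1+D^{2})^{\pm k/2}(1+D_{M}^{2})^{\mp k/2}$ extend to bounded operators on $\cH$.

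Next, I would represent $|D|^{-p}$ and $|D_{M}|^{-p}$ via Mellin transforms of heat semigroups. After a harmless finite-rank adjustment absorbing the (finite-dimensional) kernels of $D$ and $D_M$, write
\[
    |D|^{-p} = \frac{1}{\Gamma(p/2)}\int_{0}^{\infty} t^{p/2-1}\, e^{-tD^{2}}\, dt,
\]
and analogously for $D_{M}$. Duhamel's formula then yields
\[
    e^{-tD_{M}^{2}} - e^{-tD^{2}} = -\int_{0}^{t} e^{-(t-s)D^{2}}\bigl(\{D,M\} + M^{2}\bigr)\,e^{-sD_{M}^{2}}\, ds,
\]
so the difference $|D|^{-p} - |D_{M}|^{-p}$ is expressed as a double integral whose integrand we must control in trace norm. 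By Lemma~\ref{lem2}, the heat traces of $D^{2}$ and $D_{M}^{2}$ share the same small-time asymptotics $\|e^{-uD^{2}}\|_{1} = O(u^{-p/2})$ and exponential decay at infinity. The unbounded factor $D$ appearing in $\{D,M\}$ is absorbed by splitting $e^{-(t-s)D^{2}} = e^{-(t-s)D^{2}/2}\,e^{-(t-s)D^{2}/2}$ and invoking the bound $\|De^{-uD^{2}/2}\|_{\infty} = O(u^{-1/2})$; the symmetric factor $MD$ is treated analogously after transferring $D$ to the $D_{M}$-side via the Sobolev equivalence from Lemma~\ref{lem3}. Distributing Schatten--Hölder exponents over the two heat factors on either side, one obtains a trace-norm bound on the integrand jointly integrable in $(s,t)$.

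The main obstacle is the Schatten-class bookkeeping, in particular ensuring integrability near $t = 0$ when $p$ is large; this may require iterating the Duhamel expansion or decomposing the heat factors into more pieces so that the small-time singularities are split evenly enough. Once this is carried out, $|D|^{-p} - |D_{M}|^{-p} \in \cL^{1}(\cH)$, which, by the reduction in the first paragraph, completes the proof.
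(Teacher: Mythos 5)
Your reduction to showing $\abs{D}^{-p}-\abs{D_M}^{-p}\in\cL^1(\cH)$ is exactly the paper's, but your route to that statement is genuinely different. The paper avoids heat kernels entirely: it writes $(D_M^2+1)^{-n}-(D^2+1)^{-n}$, with $n=\lceil p/2\rceil$, as $(D^2+1)^{-n}\bigl((D^2+1)^n-(D_M^2+1)^n\bigr)(D_M^2+1)^{-n}$, expands the middle factor into finitely many words in $D$, $D_M$ and $M$, redistributes fractional powers of the two resolvents around each word, and applies H\"older for Schatten norms to place the difference in $\cL^{\alpha}(\cH)$, with $\alpha=1$ when $p$ is even and $\alpha=p/(p+1)$ when $p$ is odd; the odd case is finished with the Birman--Koplienko--Solomjak inequality, and an elementary eigenvalue estimate replaces $(D^2+1)^{-p/2}$ by $\abs{D}^{-p}$. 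Your Mellin--Duhamel route would prove the same trace-class statement, treats all $p$ uniformly, and needs no BKS; the paper's argument is a finite algebraic computation with no integrals to estimate.

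The gap is that the one estimate carrying all the weight is not carried out, and the repair you sketch for it would fail. You need $\norm{e^{-tD_M^2}-e^{-tD^2}}_1=O(t^{(1-p)/2})$ as $t\to 0^+$, a gain of $t^{1/2}$ over the individual heat traces, so that $\int_0^1 t^{p/2-1}\norm{e^{-tD_M^2}-e^{-tD^2}}_1\,dt<\infty$. If you bound the Duhamel integrand by H\"older with fixed exponents $1/a+1/b=1$ on the two heat factors, the $s$-integral becomes a Beta integral $\int_0^t(t-s)^{-p/(2a)-1/2}s^{-p/(2b)}\,ds$, which converges only when $p/(2a)+1/2<1$ and $p/(2b)<1$, forcing $p<3$. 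Splitting the total singularity of order $p/2+1/2$ \emph{evenly} over the factors is precisely what fails; iterating Duhamel does not remove the offending first-order term, and factoring a heat operator into several copies at the same time value gains nothing. The correct repair is asymmetric: split the $s$-integral at $s=t/2$ and, on each half, place the entire trace norm on the heat factor whose time variable is bounded below by $t/2$, using $\norm{e^{-uD^2}}_1=O(u^{-p/2})$ and $\norm{e^{-uD_M^2}}_1=O(u^{-p/2})$ (the latter needs the operator inequality from the proof of Lemma~\ref{lem2}), $\norm{De^{-uD^2}}=O(u^{-1/2})$, and $D=D_M-M$ to move the stray $D$ in the $MD$ term onto the $D_M$ side. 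This yields the required $O(t^{-p/2+1/2})$ bound and closes your argument; as written, the decisive step is missing.
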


Before continuing to the proof, we will need the following lemma, which will allow us to use $(D^{2}+1)^{-p/2}$ and $(D^{2}_{M}+1)^{-p/2}$ instead of $\abs{D}^{-p}$ and $\abs{D_{M}}^{-p}$, respectively, making calculations easier in the case that $D$ or $D_{M}$ is not invertible.

\begin{lemma}\label{lem5a}
Let $T$ is a positive operator on $\cH$ with compact resolvent, and let $\mu_{k}$ denote the $k$-th eigenvalue of $T$ in increasing order, counted with multiplicity. Suppose that $\mu_{k} = O(k^{2/p})$ as $k \to +\infty$ for some $p \in \bN$. Then $T^{-p/2} - (T+1)^{-p/2}$ is trace-class.
\end{lemma}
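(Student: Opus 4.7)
The plan is to diagonalize both $T^{-p/2}$ and $(T+1)^{-p/2}$ simultaneously in the eigenbasis of $T$, and then reduce the trace-class question to the convergence of a scalar series whose terms I can estimate by an elementary one-variable calculation.

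First, since $T$ is positive and has compact resolvent, the spectral theorem gives an orthonormal basis $\{e_k\}$ of $\cH$ consisting of eigenvectors with eigenvalues $0 \leq \mu_1 \leq \mu_2 \leq \cdots \to \infty$. Any ambiguity in defining $T^{-p/2}$ on the (necessarily finite-dimensional) kernel of $T$ alters $T^{-p/2} - (T+1)^{-p/2}$ only by a finite-rank operator, which is automatically trace-class, so I may assume without loss of generality that $\mu_1 > 0$. In this basis, both $T^{-p/2}$ and $(T+1)^{-p/2}$ are diagonal, and the difference is a positive operator with eigenvalues
\[
g(\mu_k) := \mu_k^{-p/2} - (\mu_k+1)^{-p/2}.
\]
Showing that the difference is trace-class thus reduces to showing $\sum_k g(\mu_k) < \infty$.

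Next, I would establish the pointwise estimate $g(x) \leq \tfrac{p}{2}\, x^{-p/2-1}$ for all $x > 0$ by writing
\[
g(x) = \int_x^{x+1} \tfrac{p}{2}\, t^{-p/2-1}\, dt
\]
and bounding the integrand above by its value at $t = x$ via the monotonicity of $t \mapsto t^{-p/2-1}$. This reduces the question to the summability of $\mu_k^{-p/2-1}$.

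Finally, I would invoke the growth hypothesis, which, consistent with the paper's use of the $O(k^{2/p})$-notation in Lemma~\ref{lem2}, is to be read as asserting a \emph{lower} bound $\mu_k \geq c\,k^{2/p}$ for some $c > 0$ and all $k$ sufficiently large. This gives $\mu_k^{-p/2-1} \leq c^{-p/2-1}\, k^{-1-2/p}$ for large $k$, and since $1 + 2/p > 1$, the series $\sum_k k^{-1-2/p}$ converges, completing the argument. The only mild obstacle is the correct interpretation of the growth hypothesis together with the clean disposal of the possible kernel of $T$ at the outset; once those bookkeeping points are handled, everything reduces to an elementary scalar estimate.
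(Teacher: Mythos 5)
Your proof is correct and follows essentially the same route as the paper's: diagonalize in the eigenbasis of $T$, dispose of the finite-dimensional kernel, bound the $k$-th eigenvalue of the difference by a constant times $\mu_{k}^{-p/2-1}$, and conclude summability from the growth hypothesis read (as the paper also implicitly must) as a lower bound $\mu_{k}\gtrsim k^{2/p}$. The only difference is the elementary step: you get $\mu_k^{-p/2}-(\mu_k+1)^{-p/2}\leq\tfrac{p}{2}\mu_k^{-p/2-1}$ from the integral representation $\int_{x}^{x+1}\tfrac{p}{2}t^{-p/2-1}\,dt$, whereas the paper multiplies by the conjugate and expands $(\mu_k+1)^{p}-\mu_k^{p}$ binomially; your version is marginally cleaner and does not require $p$ to be an integer.
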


\begin{proof}
First, observe that $T^{-p/2} - (T+1)^{-p/2}$ has eigenvalues
\[
	\lambda_{k} =
	\begin{cases}
		-(\mu_{k}+1)^{-p/2} &\text{if $k \leq \dim \ker T$,}\\
		\mu_{k}^{-p/2} - (\mu_{k}+1)^{-p/2} &\text{if $k > \dim \ker T$,}\\
	\end{cases}
\]
counted with multiplicity. Then for $k \gg \dim \ker T$, so that $\lambda_{k} \geq 0$ and $\mu_{k} > 1$,
\begin{align*}
	\lambda_{k} &= \frac{1}{\mu_{k}^{p/2}} - \frac{1}{(\mu_{k}+1)^{p/2}}\\ 
	&= \frac{\sum_{m=0}^{p-1} \binom{p}{m} \mu_{k}^{m}}{\mu_{k}^{p/2}(\mu_{k}+1)^{p/2}(\mu_{k}^{p/2}+(\mu_{k}+1)^{p/2})}\\
	&\leq \frac{1}{2} \left(\sum_{m=0}^{p-1} \binom{p}{m}\right) \mu_{k}^{-(p+2)/2}\\
	&= O(k^{-(1+2/p)})
\end{align*}
as $k \to +\infty$. Hence, $T^{-p/2}-(T+1)^{-p/2}$ is trace-class, as was claimed.
\end{proof}

In what follows, let $\cL^{1+}(\cH)$ denote the Dixmier trace-class ideal of $B(\cH)$, and let $\cL^{k}(\cH)$ denote the $k$-th Schatten ideal of $B(\cH)$; for further background, see~\cite{GBVF}*{Chapter 7}.

\begin{proof}[Proof of Lemma~\ref{lem5}]
Let
\[
	n = \begin{cases} \tfrac{p}{2} &\text{if $p$ is even,}\\ \tfrac{p+1}{2} &\text{if $p$ is odd}, \end{cases} \quad \alpha = \frac{2p}{n} = \begin{cases} 1 &\text{if $p$ is even,}\\ \tfrac{p}{p+1} &\text{if $p$ is odd.} \end{cases}
\]
First, we have that
\begin{align*}
	&(D_{M}^{2}+1)^{-n} - (D^{2}+1)^{-n}\\
	=\ &(D^{2}+1)^{-n}\left( (D^{2}+1)^{n} - (D_{M}^{2}+1)^{n} \right) (D_{M}^{2}+1)^{-n}\\
	=\ &(D^{2}+1)^{-n}\left(\sum_{i=0}^{n} \sum_{j=0}^{2i-1} {n \choose i} D^{j} M D_{M}^{2i-1-j}  \right) (D_{M}^{2}+1)^{-n}\\
	=\ &\sum_{i=0}^{n}\sum_{j=0}^{2i-1} {n \choose i} (D^{2}+1)^{-n+\frac{j}{2}}\left[D(D^{2}+1)^{-\frac{1}{2}}\right]^{j} M \left[D_{M}(D_{M}^{2}+1)^{-\frac{1}{2}}\right]^{2i-1-j} (D^{2}_{M}+1)^{-n+i-\frac{j+1}{2}},
\end{align*}
which, by Corollary~\ref{lem4a} and Lemma~\ref{lem3} can be checked on the common core $\cHi$ of $D$ and $D_{M}$.

Now, consider the term
\[
	(D^{2}+1)^{-n+\frac{j}{2}}\left[D(D^{2}+1)^{-\frac{1}{2}}\right]^{j} M \left[D_{M}(D_{M}^{2}+1)^{-\frac{1}{2}}\right]^{2i-1-j} (D^{2}_{M}+1)^{-n+i-\frac{j+1}{2}},
\]
where $0 \leq i \leq n$ and $0 \leq j \leq 2i-1$. Since $(\cA,\cH,D)$ is of metric dimension $p$, so too is $(\cA,\cH,D_{M})$ by Lemma~\ref{lem2}, so that
\[
	(D^{2}+1)^{-p/2}, \; (D^{2}_{M}+1)^{-p/2} \in \cL^{1+}(\cH),
\]
and hence, for all $\epsilon > 0$,
\[
	(D^{2}+1)^{-1}, \; (D^{2}_{M}+1)^{-1} \in \cL^{\alpha n+\epsilon}(\cH).
\]
Setting $\epsilon = \alpha(n-i+\tfrac{1}{2})$, we therefore find that
\[
	(D^{2}+1)^{-n+\frac{j}{2}} \in \cL^{r}(\cH),
\]
\[
	\left[D(D^{2}+1)^{-\frac{1}{2}}\right]^{j} M \left[D_{M}(D_{M}^{2}+1)^{-\frac{1}{2}}\right]^{2i-1-j} (D^{2}_{M}+1)^{-n+i-\frac{j+1}{2}} \in \cL^{s}(\cH)
\]
for
\[
	r = \frac{\alpha n+\epsilon}{n-\frac{j}{2}}, \quad s = \frac{\alpha n+\epsilon}{n-i+\frac{j+1}{2}},
\]
which satisfy $r^{-1}+s^{-1}=\alpha^{-1}$. Hence, by H{\"o}lder's inequality for Schatten norms,
\[
	(D^{2}+1)^{-n+\frac{j}{2}}\left[D(D^{2}+1)^{-\frac{1}{2}}\right]^{j} M \left[D_{M}(D_{M}^{2}+1)^{-\frac{1}{2}}\right]^{2i-1-j} (D^{2}_{M}+1)^{-n+i-\frac{j+1}{2}} \in \cL^{\alpha}(\cH);
\]
since this is true for all $i$ and $j$, it therefore follows that $(D_{M}^{2}+1)^{-n} - (D^{2}+1)^{-n} \in \cL^\alpha(\cH)$. If $p$ is even, then $\alpha = 1$ and $(D_{M}^{2}+1)^{-n} - (D^{2}+1)^{-n}$ is already trace-class; if $p$ is odd, then since
\[
	(D_M^2+1)^{-p/2\alpha} - (D^2+1)^{-p/2\alpha} = (D_{M}^{2}+1)^{-n} - (D^{2}+1)^{-n} \in \cL^\alpha(\cH)
\]
for $0 < \alpha < 1$, we can apply the BKS inequality~\cite{BKS}*{Thm.~1} to \[\abs{(D_M^2+1)^{-p/2\alpha} - (D^2+1)^{-p/2\alpha}}^\alpha \in \cL^{1}(\cH)\] to conclude that $(D^2_M+1)^{-p/2} - (D^2+1)^{-p/2}$ is indeed trace-class. Either way, from these calculations and Lemma~\ref{lem5a}, it therefore follows that $\abs{D_{M}}^{-p} - \abs{D}^{-p}$ is trace-class, yielding the desired result.
\end{proof}

\begin{bibdiv}
\begin{biblist}
\bib{Bar07}{article}{
   author={Barrett, John W.},
   title={Lorentzian version of the noncommutative geometry of the Standard
   Model of particle physics},
   journal={J. Math. Phys.},
   volume={48},
   date={2007},
   number={1},
}
\bib{BGV}{book}{
   author={Berline, Nicole},
   author={Getzler, Ezra},
   author={Vergne, Mich{\`e}le},
   title={Heat kernels and Dirac operators},
   series={Grundlehren Text Editions},
   note={Corrected reprint of the 1992 original},
   publisher={Springer-Verlag},
   place={Berlin},
   date={2004},
}
\bib{BKS}{article}{
   author={Birman, M. {\v{S}}.},
   author={Koplienko, L. S.},
   author={Solomjak, M. Z.},
   title={Estimates of the spectrum of a difference of fractional powers of
   selfadjoint operators},
   journal={Izv. Vys\v s. U\v cebn. Zaved. Matematika},
   date={1975},
   number={3(154)},
   pages={3--10},
}
\bib{BvS10}{article}{
	author={Boeijink, Jord},
	author={van Suijlekom, Walter D.},
	title={The noncommutative geometry of Yang-Mills fields},
	journal={J.\ Geom.\ Phys.},
	volume={61},
	date={2011},
	number={6},
	pages={1122-1134},
}
\bib{Ca10}{article}{
	author={\'Ca\'ci\'c, Branimir},
	title={Moduli spaces of Dirac operators for finite spectral triples},
	conference={
		title={Quantum groups and noncommutative spaces},
	},
	book={
		editor={Marcolli, Matilde},
		editor={Parashar, Deepak},
		series={Aspects Math.},
		volume={41},
		publisher={Vieweg + Teubner, Wiesbaden},
	},
	date={2011},
	pages={9--68},
}
\bib{CP98}{article}{
   author={Carey, Alan},
   author={Phillips, John},
   title={Unbounded Fredholm modules and spectral flow},
   journal={Canad.\ J.\ Math.},
   volume={50},
   date={1998},
   number={4},
   pages={673--718},
}
\bib{ChM09}{article}{
   author={Chakraborty, Partha Sarathi},
   author={Mathai, Varghese},
   title={The geometry of determinant line bundles in noncommutative
   geometry},
   journal={J. Noncommut. Geom.},
   volume={3},
   date={2009},
   number={4},
   pages={559--578},
}
\bib{CC97}{article}{
   author={Chamseddine, Ali H.},
   author={Connes, Alain},
   title={The spectral action principle},
   journal={Comm. Math. Phys.},
   volume={186},
   date={1997},
   number={3},
   pages={731--750},
}
\bib{CC06}{article}{
   author={Chamseddine, Ali H.},
   author={Connes, Alain},
   title={Inner fluctuations of the spectral action},
   journal={J. Geom. Phys.},
   volume={57},
   date={2006},
   number={1},
   pages={1--21},
}
\bib{CCM07}{article}{
   author={Chamseddine, Ali H.},
   author={Connes, Alain},
   author={Marcolli, Matilde},
   title={Gravity and the standard model with neutrino mixing},
   journal={Adv. Theor. Math. Phys.},
   volume={11},
   date={2007},
   number={6},
   pages={991--1089},
}
\bib{Con88}{article}{
   author={Connes, A.},
   title={The action functional in noncommutative geometry},
   journal={Comm. Math. Phys.},
   volume={117},
   date={1988},
   number={4},
   pages={673--683},
}
\bib{Con95}{article}{
   author={Connes, Alain},
   title={Geometry from the spectral point of view},
   journal={Lett. Math. Phys.},
   volume={34},
   date={1995},
   number={3},
   pages={203--238},
}
\bib{Con95b}{article}{
   author={Connes, Alain},
   title={Noncommutative geometry and reality},
   journal={J. Math. Phys.},
   volume={36},
   date={1995},
   number={11},
   pages={6194--6231},
}
\bib{Con96}{article}{
   author={Connes, Alain},
   title={Gravity coupled with matter and the foundation of non-commutative
   geometry},
   journal={Comm. Math. Phys.},
   volume={182},
   date={1996},
   number={1},
   pages={155--176},
}
\bib{Con06}{article}{
   author={Connes, Alain},
   title={Noncommutative geometry and the standard model with neutrino
   mixing},
   journal={J. High Energy Phys.},
   date={2006},
   number={11},
}
\bib{Con08}{article}{
	author={Connes, Alain},
	title={On the spectral characterization of manifolds},
	date={2008},
	eprint={arXiv:0810.2088v1 [math.OA]}
}
\bib{Con08b}{article}{
	author={Connes, Alain},
	title={The spectral characterization of manifolds},
	conference={
		title={Distinguished Lecture, Thematic Program on Operator Algebras},
		address={Fields Institute, Toronto, ON, Canada},
		date={2008-05-28},
	},		
}
\bib{CoMa}{book}{
   author={Connes, Alain},
   author={Marcolli, Matilde},
   title={Noncommutative geometry, quantum fields and motives},
   series={American Mathematical Society Colloquium Publications},
   volume={55},
   publisher={American Mathematical Society},
   place={Providence, RI},
   date={2008},
}
\bib{CMo95}{article}{
   author={Connes, A.},
   author={Moscovici, H.},
   title={The local index formula in noncommutative geometry},
   journal={Geom. Funct. Anal.},
   volume={5},
   date={1995},
   number={2},
   pages={174--243},
}
\bib{vdD11}{thesis}{
	author={van den Dungen, K.},
	title={The structure of gauge theories in almost-commutative geometries},
	type={Master's thesis},
	organization={Radbout University Nijmegen},
	date={2011}
}
\bib{FGBLV}{article}{
   author={Figueroa, H.},
   author={Gracia-Bond{\'{\i}}a, J. M.},
   author={Lizzi, F.},
   author={V{\'a}rilly, J. C.},
   title={A nonperturbative form of the spectral action principle in
   noncommutative geometry},
   journal={J. Geom. Phys.},
   volume={26},
   date={1998},
   number={3-4},
   pages={329--339},
}
\bib{Gil}{book}{
   author={Gilkey, Peter B.},
   title={Invariance theory, the heat equation, and the Atiyah-Singer index
   theorem},
   series={Studies in Advanced Mathematics},
   edition={2},
   publisher={CRC Press},
   place={Boca Raton, FL},
   date={1995},
}
\bib{GBVF}{book}{
   author={Gracia-Bond{\'{\i}}a, Jos{\'e} M.},
   author={V{\'a}rilly, Joseph C.},
   author={Figueroa, H{\'e}ctor},
   title={Elements of noncommutative geometry},
   series={Birkh\"auser Advanced Texts: Basler Lehrb\"ucher},
   publisher={Birkh\"auser Boston Inc.},
   place={Boston, MA},
   date={2001},
}
\bib{Hig06}{article}{
   author={Higson, Nigel},
   title={The residue index theorem of Connes and Moscovici},
   conference={
      title={Surveys in noncommutative geometry},
   },
   book={
      editor={Higson, Nigel},
      editor={Roe, John},
      series={Clay Math. Proc.},
      volume={6},
      publisher={Amer. Math. Soc.},
      place={Providence, RI},
   },
   date={2006},
   pages={71--126},
}
\bib{ILV}{article}{
	title={Spectral action beyond the weak-field approximation},
	author={Iochum, B.},
	author={Levy, C.},
	author={Vassilevich, D.},
	date={2011},
	eprint={arXiv:1108.3749v1 [hep-th]}
}
\bib{ACG1}{article}{
	title={On a classification of irreducible almost commutative geometries},
	author={Iochum, Bruno},
	author={Sch{\"u}cker, Thomas},
	author={Stephan, Christoph},
	journal={J. Math. Phys.},
	volume={45},
	date={2004},
	number={12},
	pages={5003--5041},
}
\bib{ACG2}{article}{
	title={On a classification of irreducible almost commutative geometries, a second helping},
	author={Jureit, Jan-H.},
	author={Stephan, Christoph A.},
	journal={J. Math. Phys.},
	volume={46},
	date={2005},
	number={4},
}
\bib{ACG3}{article}{
	title={On a classification of irreducible almost commutative geometries III},
	author={Jureit, Jan-Hendrik},
	author={Sch{\"u}cker, Thomas},
	author={Stephan, Christoph},
	journal={J. Math. Phys.},
	volume={46},
	date={2005},
	number={7},
}
\bib{ACG4}{article}{
	title={On a classification of irreducible almost commutative geometries IV},
	author={Jureit, Jan-Hendrik},
	author={Stephan, Christoph A.},
	journal={J. Math. Phys.},
	volume={49},
	date={2008},
	number={3}
}
\bib{ACG5}{article}{
	title={On a classification of irreducible almost commutative geometries, V},
	author = {Jureit, Jan-Hendrik},
	author={Stephan, Christoph A.},
	journal={J. Math. Phys.},
	volume={50},
	date={2009},
	number={7}
}
\bib{Kra98}{article}{
   author={Krajewski, Thomas},
   title={Classification of finite spectral triples},
   journal={J. Geom. Phys.},
   volume={28},
   date={1998},
   number={1-2},
   pages={1--30},
}
\bib{Me09}{book}{
   author={Mesland, Bram},
   title={Bivariant K-theory of groupoids and the noncommutative geometry of limit sets},
   series={Bonner Mathematische Schriften},
   volume={394},
   publisher={Universit\"at Bonn, Mathematisches Institut},
   place={Bonn},
   date={2009},
}
\bib{LM}{book}{
   author={Lawson, H. Blaine, Jr.},
   author={Michelsohn, Marie-Louise},
   title={Spin geometry},
   publisher={Princeton University Press},
   place={Princeton, NJ},
   date={1989},
}
\bib{Me09b}{article}{
	author={Mesland, Bram},
	title={Unbounded bivariant $K$-theory and correspondences in noncommutative geometry},
	date={2009},
	eprint={arXiv:0904.4383v2 [math.KT]},
}
\bib{PS98}{article}{
   author={Paschke, Mario},
   author={Sitarz, Andrzej},
   title={Discrete spectral triples and their symmetries},
   journal={J. Math. Phys.},
   volume={39},
   date={1998},
   number={11},
   pages={6191--6205},
}
\bib{PR}{article}{
   author={Prokhorenkov, Igor},
   author={Richardson, Ken},
   title={Perturbations of Dirac operators},
   journal={J. Geom. Phys.},
   volume={57},
   date={2006},
   number={1},
   pages={297--321},
}
\bib{RS2}{book}{
   author={Reed, Michael},
   author={Simon, Barry},
   title={Methods of modern mathematical physics. II. Fourier analysis,
   self-adjointness},
   publisher={Academic Press [Harcourt Brace Jovanovich Publishers]},
   place={New York},
   date={1975},
}
\bib{RS4}{book}{
   author={Reed, Michael},
   author={Simon, Barry},
   title={Methods of modern mathematical physics. IV. Analysis of operators},
   publisher={Academic Press [Harcourt Brace Jovanovich Publishers]},
   place={New York},
   date={1978},
}
\bib{RV06}{article}{
	author={Rennie, Adam},
	author={V\'arilly, Joseph C.},
	title={Reconstruction of manifolds in noncommutative geometry},
	date={2006},
	eprint={arXiv:math/0610418v4 [math.OA]},
}
\bib{Roe}{book}{
   author={Roe, John},
   title={Elliptic operators, topology and asymptotic methods},
   series={Pitman Research Notes in Mathematics Series},
   volume={395},
   edition={2},
   publisher={Longman},
   place={Harlow},
   date={1998},
}
\bib{RV1}{article}{
	author={Roepstorff, G.},
	author={Vehns, Ch.},
	title={An introduction to Clifford supermodules},
	eprint={arXiv:math-ph/9908029v2},
	date={1999}
}
\bib{RV2}{article}{
	author={Roepstorff, G.},
	author={Vehns, Ch.},
	title={Generalized Dirac operators and superconnections},
	eprint={arXiv:math-ph/9911006v1},
	date={1999}
}
\bib{Zha10}{thesis}{
	author={Zhang, Dapeng},
	title={Projective Dirac operators, twisted $K$-theory and local index formula},
	type={Ph.D. dissertation},
	organization={California Institute of Technology},
	date={2011},
}
\end{biblist}
\end{bibdiv}

\end{document}